\newcommand{\pw}{\textit{pw}} 
\newcommand{\ppw}{\textit{ppw}} 
\newcommand{\vbw}{\textit{vbw}} 
\newcommand{\bw}{\textit{bw}} 
\newcommand{\wei}{\textit{wei}} 
\newcommand{\Bndry}{\partial} 
\newcommand{\nBndry}{\beta}   
\newcommand{\ls}{\mathbin{\prec}}
\newcommand{\gs}{\mathbin{\succ}}
\newcommand{\les}{\mathbin{\preceq}}
\newcommand{\ges}{\mathbin{\succeq}}
\newcommand{\cp}{\mathbin{\square}}
\newcommand{\dGrid}[1]{\prod_{i=1}^{#1} P_{n_{i}}} 
\newcommand{\dTorus}[1]{\prod_{i=1}^{#1} C_{2n_{i}}} 
\newcommand{\unit}[1]{\hat{#1}}
\newcommand{\floor}[1]{\left\lfloor #1 \right\rfloor}
\newcommand{\IF}{\textrm{if }} 
\newcommand{\OW}{\textrm{otherwise}} 
\newcommand{\theoremname}{Theorem}
\newtheorem{thm}{\theoremname}[section]
\newcommand{\lemname}{Lemma}
\newtheorem{lem}[thm]{\lemname}
\newcommand{\corname}{Corollary}
\newtheorem{cor}[thm]{\corname}
\newcommand{\propname}{Proposition}
\newcommand{\factname}{Fact}
\newtheorem{fact}[thm]{\factname}
\newcommand{\claimname}{Claim}
\newtheorem{claim}[thm]{\claimname}
\newcommand{\conjname}{Conjecture}
\newtheorem{conj}[thm]{\conjname}
\begin{document}
 \title{Bandwidth and pathwidth of three-dimensional grids}

 \author{Yota Otachi\thanks{
 Graduate School of Information Sciences, Tohoku University,
 Sendai 980-8579, Japan. JSPS Research Fellow.
 Corresponding author. \texttt{otachi@comp.cs.gunma-u.ac.jp}
 }
\and
Ryohei Suda\thanks{
Department of Computer Science, Gunma University,
 1-5-1 Tenjin-cho, Kiryu, Gunma 376-8515, Japan.
}
}
 %


 \maketitle
 
 \begin{abstract}
  We study the bandwidth and the pathwidth of multi-dimensional grids.
  It can be shown for grids, that these two parameters are equal to a more basic graph parameter, the vertex boundary width.
  Using this fact, 
  we determine the bandwidth and the pathwidth of three-dimensional grids, which were known only for the cubic case.
  As a by-product, we also determine the two parameters of multi-dimensional grids with relatively large maximum factors.
 \end{abstract}


 \section{Introduction}
 \label{sec:intro}
 In this paper, we study two well-known graph parameters, the bandwidth and the pathwidth.
 These two parameters were defined in different areas of Computer Science.
 However, there is a close relation between the two parameters.
 In fact, it is known that the bandwidth of a graph is at least its pathwidth.
 Furthermore, as we will show, these two parameters are identical for grids.
 Since grid-like graphs, especially two or three-dimensional grids, arise in many practical situations,
 several graph parameters of them have been studied intensively~\cite{RSV95,SSV04,TV2010,BDH04,KOY09sec,BaloghMubayiPluhar2006,PikhurkoWojciechowski2006}.
 In particular, the bandwidth and the pathwidth of grids and tori were studied by several researchers~\cite{Fit74,Chv75,PSSV93,EW08}.
 However, closed formulas of these parameter for noncubic three-dimensional grids and four or more-dimensional grids were not known previously.
 We study these grids and present closed formulas for some cases.

 The rest of this paper is organized as follows.
 In Section~\ref{sec:pre}, 
 we give some definitions and known results.
 In Section~\ref{sec:large},
 we determine the two parameters for multi-dimensional grids that have relatively large maximum factors.
 Generally speaking, large-dimensional cases are difficult to handle.
 However, we show that if the maximum factor in a grid is relatively large then the two parameters can be easily determined.
 In Section~\ref{sec:3gri},
 we determine the two parameters of three-dimensional grids.
 FitzGerald~\cite{Fit74} determined the bandwidth of cubic grids $P_{n} \cp P_{n} \cp P_{n}$.
 We properly extend this result to noncubic cases $P_{n_{1}} \cp P_{n_{2}} \cp P_{n_{3}}$.
 In the last section, we conclude this paper and give a conjecture for the four-dimensional case.


 \section{Preliminaries}
 \label{sec:pre}
 In this section, we define some graph parameters, and a graph operation.
 Grids, and tori are also defined here.
 After the definitions, we provide some known results as well as some useful observations.
 All graphs in this paper are finite, simple, and connected.
 We denote the vertex and edge sets of a graph $G$ by $V(G)$ and $E(G)$, respectively.

  \subsection{Bandwidth, pathwidth, and vertex boundary width}
  The bandwidth of graphs was defined by Harper~\cite{Har66}.
  An \emph{ordering} of a graph $G$ is a bijection $f \colon V(G) \to \{1,2,\dots,|V(G)|\}$.
  The \emph{bandwidth of an ordering $f$} is $\bw_{G}(f) = \max_{\{u,v\} \in E(G)} |f(u) - f(v)|$,
  and the \emph{bandwidth} of $G$ is $\bw(G) = \min_{f} \bw_{G}(f)$.
  The bandwidth problem appears in a lot of areas of Computer Science such as
  VLSI layouts and parallel computing. See surveys~\cite{CCDG82,DPS02}.

  The pathwidth of graphs was defined by Robertson and Seymour~\cite{RS83} in their work of the Graph Minor Theory.
  Given a graph $G$, a sequence $X_{1}, \dots, X_{r}$ of subsets of $V(G)$ is a \emph{path decomposition} of $G$
  if the following conditions are satisfied:
  \begin{enumerate}
   \item $\bigcup_{1 \le i \le r} X_{i} = V(G)$,
   \item for each $\{u,v\} \in E(G)$, there exists an index $i$ such that $\{u,v\} \subseteq X_{i}$,
   \item for $i \le j \le k$, $X_{i} \cap X_{k} \subseteq X_{j}$.
  \end{enumerate}
  The \emph{width} of a path decomposition $X_{1}, \dots, X_{r}$
  is $\max_{1 \le i \le r} |X_{i}| - 1$.
  The \emph{pathwidth} of $G$, denoted by $\pw(G)$,
  is the minimum width over all path decompositions of $G$.
  A path decomposition $X_{1}, \dots, X_{r}$ of $G$ is \emph{proper}
  if $\{i \mid u \in X_{i}\} \not\subset \{i \mid v \in X_{i}\}$ for all $u,v \in V(G)$.
  The \emph{proper pathwidth} of $G$, $\ppw(G)$, 
  is the minimum width over all proper path decompositions of $G$.
  Clearly, $\pw(G) \le \ppw(G)$ for any graph $G$.
  A nontrivial relation $\pw(G) \le \bw(G)$ is a corollary of the following fact.
  \begin{thm}
   [\cite{KS96}]
   For any graph $G$, $\bw(G) = \ppw(G)$.
  \end{thm}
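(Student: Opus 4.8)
The plan is to establish the two inequalities $\ppw(G) \le \bw(G)$ and $\bw(G) \le \ppw(G)$ separately; together with the trivial $\pw(G) \le \ppw(G)$ this also recovers the stated corollary $\pw(G) \le \bw(G)$. Write $n = |V(G)|$, and observe that both bounds are immediate when $n \le \bw(G)+1$, so I may assume $n$ is large. The conceptual point throughout is that a \emph{proper} path decomposition is exactly a proper-interval representation of a supergraph of $G$, and such representations are precisely those in which the two natural endpoint orders of the occurrence intervals coincide.

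For $\ppw(G) \le \bw(G)$, I would start from an optimal ordering $f$ with $k = \bw_{G}(f) = \bw(G)$ and build a padded sliding window. Setting $X_i = \{\, v : \max(1, i-k) \le f(v) \le \min(n, i)\,\}$ for $i = 1, \dots, n+k$, a short computation shows that the set of indices $i$ with $v \in X_i$ is exactly the interval $[\,f(v),\, f(v)+k\,]$, so every vertex occurs in precisely $k+1$ consecutive bags. This makes $X_1,\dots,X_{n+k}$ a path decomposition (the occurrence sets are intervals) of width $k$ (each bag has at most $k+1$ elements), in which every edge $\{u,v\}$ is covered because its endpoints differ by at most $k$ under $f$ and hence both lie in $X_{\max(f(u),f(v))}$. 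The key observation is that, since all occurrence intervals have the same length $k+1$, none can be a strict subset of another; thus the decomposition is proper, giving $\ppw(G) \le k$.

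For $\bw(G) \le \ppw(G)$, I would take a proper path decomposition $X_1,\dots,X_r$ of width $k = \ppw(G)$ and, for each vertex $v$, record $L(v) = \min\{\, i : v \in X_i\,\}$ and $R(v) = \max\{\, i : v \in X_i\,\}$, so the occurrence set of $v$ is the interval $[L(v), R(v)]$. Properness---no occurrence interval strictly contained in another---forces the two orders to agree: if $L(u) < L(v)$ then $R(u) \le R(v)$ (otherwise $v$'s interval would sit strictly inside $u$'s), and if $L(u) = L(v)$ then $R(u) = R(v)$. Hence sorting the vertices by $L$ makes $R$ nondecreasing as well; let $f$ be the resulting rank function. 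To bound $\bw_{G}(f)$, take an edge $\{u,v\}$ with $f(u) \le f(v)$ and a bag $X_i$ containing both. Any $w$ with $f(u) \le f(w) \le f(v)$ then satisfies $L(w) \le L(v) \le i \le R(u) \le R(w)$, so $w \in X_i$; therefore the $f(v)-f(u)+1$ vertices ranked between $u$ and $v$ all lie in $X_i$, whence $f(v)-f(u) \le |X_i| - 1 \le k$. This yields $\bw(G) \le \bw_{G}(f) \le k$.

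The step I expect to carry the real weight is the reverse direction, specifically the argument that every vertex ranked between the endpoints of an edge is squeezed into a single common bag $X_i$: this is precisely where properness is indispensable, since it is what guarantees the simultaneous monotonicity of $L$ and $R$ that lets the chain $L(w) \le i \le R(w)$ go through. Dropping properness only yields $\pw(G) \le \bw(G)$. In the forward direction the one delicate point is the boundary: a naive window of the form $\{\, v : i \le f(v) \le i+k\,\}$ produces shorter occurrence intervals near the ends and hence strict containments, so the symmetric padding over the index range $1,\dots,n+k$ is needed to keep all occurrence intervals of equal length and thereby preserve properness.
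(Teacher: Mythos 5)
The paper offers no proof of this statement; it is imported verbatim from Kaplan and Shamir \cite{KS96}, so there is no internal argument to compare against. Your proof is correct and self-contained: the padded sliding window $X_i=\{v: \max(1,i-k)\le f(v)\le\min(n,i)\}$, $1\le i\le n+k$, does give every vertex an occurrence interval of the same length $k+1$ (hence properness), and in the converse direction the properness condition does force $L$ and $R$ to be simultaneously monotone, which is exactly what squeezes all intermediate vertices into one bag. This is essentially the standard proper-interval-completion argument behind the cited theorem, so you have supplied a correct proof of a fact the paper merely quotes.
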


  Let $\Bndry_{G}(A)$ denote the set of boundary vertices of $A$, that is,
  $\Bndry_{G}(A) = \{v \in V(G) \setminus A \mid \exists u \in A, \{u,v\} \in E(G)\}$.
  Let $\nBndry_{G}(k) = \min_{S \subseteq V(G), \; |S| = k} |\Bndry_{G}(S)|$.
  The \emph{vertex isoperimetric problem (VIP)} on a graph $G$ for given $k$ is to find a vertex set $S \subseteq V(G)$
  such that $|S| = k$ and $|\Bndry_{G}(S)| = \nBndry_{G}(k)$.
  We define the \emph{vertex boundary width} of $G$ as $\vbw(G) = \max_{1 \le k \le |V(G)|} \nBndry_{G}(k)$.
  We often omit the subscript $G$ of $\Bndry_{G}$ and $\nBndry_{G}$ if the graph $G$ is clear from the context.
  The following theorem implies $\vbw(G) \le \pw(G)$ for any graph $G$.
  \begin{thm}
   [\cite{CK06tw}]
   For any graph $G$ and $1 \le k \le |V(G)|$, $\nBndry(k) \le \pw(G)$.
  \end{thm}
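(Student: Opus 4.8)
The plan is to fix an optimal path decomposition and, for every target size $k$, extract from it a vertex set of size exactly $k$ whose boundary is trapped inside a single bag. So first I would take a path decomposition $X_{1}, \dots, X_{r}$ of $G$ achieving the optimum, meaning $|X_{i}| \le \pw(G) + 1$ for every $i$. For each vertex $v$ I record its last appearance $f(v) = \max\{j \mid v \in X_{j}\}$ and then list the vertices as $v_{1}, \dots, v_{n}$ (with $n = |V(G)|$) in nondecreasing order of $f$, breaking ties arbitrarily. Writing $S_{t} = \{v_{1}, \dots, v_{t}\}$ produces a chain with $|S_{t}| = t$, so it will suffice to control $S_{k}$ for the prescribed $k$.

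The heart of the argument is the claim $\Bndry(S_{t}) \subseteq X_{f(v_{t})} \setminus \{v_{t}\}$. To place the boundary inside $X_{f(v_{t})}$, take any $w \in \Bndry(S_{t})$ and set $i = f(v_{t})$. Since $w$ is adjacent to some $u \in S_{t}$, the edge-covering condition furnishes a bag $X_{j}$ with $u, w \in X_{j}$; because $u \in S_{t}$ has $f(u) \le i$, this forces $j \le i$. On the other hand $w \notin S_{t}$ puts $w$ strictly after $v_{t}$ in the ordering, so $f(w) \ge i$ and hence $w$ also lies in a bag of index $\ge i$. Occupying bags on both sides of $i$, the vertex $w$ must lie in $X_{i}$ by the contiguity (interval) condition. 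Finally $w \ne v_{t}$, since $w \notin S_{t}$ while $v_{t} \in S_{t}$, and $v_{t} \in X_{i}$ because $i = f(v_{t})$ is its last appearance; together these give $|\Bndry(S_{t})| \le |X_{f(v_{t})}| - 1 \le \pw(G)$.

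To finish I would apply the claim with $t = k$: the set $S_{k}$ has size $k$ and boundary at most $\pw(G)$, so $\nBndry(k) \le |\Bndry(S_{k})| \le \pw(G)$, as required.

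The delicate point I expect to be the main obstacle is securing the $-1$ that upgrades the naive estimate $|X_{i}|$ to $|X_{i}| - 1$; this is exactly what makes the bound $\pw(G)$ rather than $\pw(G)+1$. Everything hinges on the observation that $v_{t}$ itself always sits in the bag $X_{f(v_{t})}$ while being disqualified from the boundary by membership in $S_{t}$. The remaining care is purely bookkeeping: checking that the ordering by last appearance is coherent enough that ``$w$ comes after $v_{t}$'' genuinely yields $f(w) \ge f(v_{t})$, and invoking contiguity in the form ``a vertex occupying bags on both sides of index $i$ must occupy $X_{i}$.''
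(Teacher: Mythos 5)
Your argument is correct. Note that the paper itself offers no proof of this statement---it is quoted as a known theorem of Chandran and Kavitha \cite{CK06tw}---so there is no in-paper argument to compare against; your proof is the standard one for this fact. The key steps all check out: ordering the vertices by last occurrence $f(v)$ in an optimal path decomposition makes each prefix $S_t$ satisfy $\Bndry(S_t) \subseteq X_{f(v_t)} \setminus \{v_t\}$, since any boundary vertex $w$ lies in a bag of index at most $f(v_t)$ (the bag covering an edge to some $u \in S_t$, whose index is at most $f(u) \le f(v_t)$) and in a bag of index at least $f(v_t)$ (namely $X_{f(w)}$, with $f(w) \ge f(v_t)$ because $w$ comes later in the order), so the interval condition places $w$ in $X_{f(v_t)}$; and the $-1$ is correctly secured because $v_t$ itself occupies $X_{f(v_t)}$ but is excluded from the boundary. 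Taking $t = k$ gives $\nBndry(k) \le |\Bndry(S_k)| \le \pw(G)$ for every $k$, which is exactly the claim.
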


  From the above observations, we have the inequality $\vbw(G) \le \pw(G) \le \bw(G)$ for any graph $G$.
  Harper~\cite{Har66,Har04} showed that the equality also holds for some graphs.
  An ordering on $V(G)$ is \emph{isoperimetric} for $G$ if
  $|\Bndry(I_{k})| = \nBndry(k)$ and
  $I_{k} \cup \Bndry(I_{k}) = I_{k + |\Bndry(I_{k})|}$ for all $k$,
  where $I_{k}$ is the set of the first $k$ vertices of $V(G)$ in the ordering.
  \begin{thm}
   [\cite{Har04}]
   If a graph $G$ has an isoperimetric ordering on $V(G)$
   then $\vbw(G) = \bw(G)$.
  \end{thm}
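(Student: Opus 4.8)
The plan is to complete the chain of inequalities $\vbw(G) \le \pw(G) \le \bw(G)$ already established above. Since that chain gives $\vbw(G) \le \bw(G)$ for free, it suffices to prove the reverse inequality $\bw(G) \le \vbw(G)$. The natural candidate witness is the isoperimetric ordering $f$ supplied by the hypothesis: I would show directly that $\bw_{G}(f) \le \vbw(G)$, which immediately yields $\bw(G) \le \bw_{G}(f) \le \vbw(G)$ and hence the desired equality. Throughout, write $I_{k}$ for the set of the first $k$ vertices in the ordering $f$, so that the two defining properties of an isoperimetric ordering, namely $|\Bndry(I_{k})| = \nBndry(k)$ and $I_{k} \cup \Bndry(I_{k}) = I_{k + |\Bndry(I_{k})|}$, are both available for every $k$.

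To bound $\bw_{G}(f)$, I would fix an arbitrary edge $\{u,v\} \in E(G)$ with, say, $f(u) = i < j = f(v)$, and aim to show $j - i \le \vbw(G)$; taking the maximum over all edges then gives $\bw_{G}(f) \le \vbw(G)$. First observe that $v \in \Bndry(I_{i})$: indeed $u \in I_{i}$ since $f(u) = i$, whereas $v \notin I_{i}$ since $f(v) = j > i$, and $\{u,v\}$ is an edge, so $v$ meets the definition of a boundary vertex of $I_{i}$. The crucial step is then to invoke the second isoperimetric property at $k = i$, which states $I_{i} \cup \Bndry(I_{i}) = I_{i + |\Bndry(I_{i})|}$. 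Since $v \in \Bndry(I_{i})$, this forces $v$ to appear among the first $i + |\Bndry(I_{i})|$ vertices, i.e.\ $j = f(v) \le i + |\Bndry(I_{i})|$. Applying the first isoperimetric property $|\Bndry(I_{i})| = \nBndry(i)$ together with the definition $\vbw(G) = \max_{1 \le k \le |V(G)|} \nBndry(k)$, we conclude $j - i \le |\Bndry(I_{i})| = \nBndry(i) \le \vbw(G)$, as required.

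I do not expect a genuine obstacle in this argument, but its conceptual heart is the second isoperimetric condition: it is precisely the statement that the nested initial segments $I_{k}$ grow by absorbing exactly their own boundary at each stage, and this is what prevents any edge from reaching forward by more than the local boundary size. The first condition then merely converts that local boundary size into a quantity bounded by $\vbw(G)$. Combining $\bw(G) \le \bw_{G}(f) \le \vbw(G)$ with the inherited inequality $\vbw(G) \le \bw(G)$ completes the proof.
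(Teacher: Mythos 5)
Your argument is correct. The paper does not prove this statement at all---it is imported from Harper's book \cite{Har04}---but your proof is the standard (and essentially the intended) one: for an edge $\{u,v\}$ with $f(u)=i<f(v)$, the vertex $v$ lies in $\Bndry(I_{i})$, and the nestedness condition $I_{i}\cup\Bndry(I_{i})=I_{i+|\Bndry(I_{i})|}$ forces $f(v)\le i+|\Bndry(I_{i})|=i+\nBndry(i)\le i+\vbw(G)$, so $\bw(G)\le\bw_{G}(f)\le\vbw(G)$, while the reverse inequality follows from the chain $\vbw(G)\le\pw(G)\le\bw(G)$ already established in the paper.
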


  The observations in this subsection give the following corollary.
  \begin{cor}
   If a graph $G$ has an isoperimetric ordering on $V(G)$
   then $\vbw(G) = \pw(G) = \bw(G)$.
  \end{cor}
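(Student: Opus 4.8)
The plan is to combine the universal inequality chain assembled just above the statement with Harper's theorem, and then squeeze. First I would recall that for \emph{every} graph $G$ one has
$\vbw(G) \le \pw(G) \le \bw(G)$.
The left-hand inequality is obtained from the theorem of \cite{CK06tw}: since $\nBndry(k) \le \pw(G)$ holds for every $k$, taking the maximum over $1 \le k \le |V(G)|$ gives $\vbw(G) \le \pw(G)$. The right-hand inequality follows by chaining the trivial bound $\pw(G) \le \ppw(G)$ with the identity $\bw(G) = \ppw(G)$ of \cite{KS96}.

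Next I would invoke the hypothesis. Because $G$ is assumed to admit an isoperimetric ordering on $V(G)$, the theorem of \cite{Har04} yields $\vbw(G) = \bw(G)$. Substituting this single equality into the chain above pins the middle term between two coinciding ends, namely $\vbw(G) \le \pw(G) \le \bw(G) = \vbw(G)$, which forces $\vbw(G) = \pw(G) = \bw(G)$. This is exactly the desired conclusion.

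The argument is a pure sandwich, so I do not expect a genuine obstacle: all of the mathematical content is already packaged in the two cited theorems and in the identity $\bw = \ppw$, and the corollary merely records their joint consequence. The only point worth stating carefully is that the universal inequality $\vbw(G) \le \pw(G) \le \bw(G)$ is valid for \emph{all} graphs, not only grids; once that is in place, no further computation is needed and the write-up reduces to assembling these pieces.
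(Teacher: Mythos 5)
Your proposal is correct and is exactly the argument the paper intends: the universal chain $\vbw(G) \le \pw(G) \le \bw(G)$ (from the cited theorems of \cite{CK06tw} and \cite{KS96}) combined with Harper's equality $\vbw(G) = \bw(G)$ under the isoperimetric-ordering hypothesis, followed by the squeeze. The paper leaves this as an immediate consequence of ``the observations in this subsection,'' and your write-up simply makes that sandwich explicit.
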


  \subsection{Grids and tori}
  The \emph{Cartesian product} of graphs $G$ and $H$, denoted by $G \cp H$,
  is the graph whose vertex set is $V(G) \times V(H)$
  and in which a vertex $(g,h)$ is adjacent to a vertex $(g',h')$ if and only if either
  $g = g'$ and $\{h, h'\} \in E(H)$, or $h = h'$ and $\{g, g'\} \in E(G)$.
  It is easy to see that the Cartesian product operation is associative and commutative up to isomorphism.
  We denote the Cartesian product of $d$ graphs $G_{1}, G_{2}, \dots, G_{d}$ by $\prod_{i=1}^{d} G_{i}$.
  Also, we denote $\prod_{i=1}^{d} G$ by $G^{d}$.

  For $n \ge 2$, a \emph{path} $P_{n}$ is a graph whose vertex set is $\{0,\dots,n-1\}$
  and edge set is $\{\{i, i + 1\} \mid 0 \le i \le n-2\}$.
  For $n_{1} \le \dots \le n_{d}$, the graph $\dGrid{d}$ is called a \emph{$d$-dimensional grid}.
  We call $P_{n}^{3}$ a \emph{cubic grid}.
  Let $v = (v_{1},\dots, v_{d})$ is a vertex of $\dGrid{d}$.
  Then the \emph{weight} of $v$ is defined as $\wei(v) = \sum_{i = 1}^{d} v_{i}$.
  For $n \ge 3$, a \emph{cycle} $C_{n}$ is a path with a wrap around edge,
  that is, $V(C_{n}) = V(P_{n})$ and $E(C_{n}) = \{\{n-1, 0\}\} \cup E(P_{n})$.
  For $n_{1} \le \dots \le n_{d}$, we call $\dTorus{d}$ a \emph{$d$-dimensional even torus}.

  Let $\unit{k} = (\unit{k}_{1}, \dots, \unit{k}_{d})$ be the $k$th unit vector in a $d$-dimensional space,
  that is, $\unit{k}_{k} = 1$ and $\unit{k}_{i} = 0$ for all $i \ne k$.
  For $v = (v_{1}, \dots, v_{d})$, we have $(v + \unit{k})_{k} = v_{k} + 1$ and $(v + \unit{k})_{i} = v_{i}$ for all $i \ne k$.
  It is easy to see that
  for $u,v \in V(\dGrid{d})$, $\{u,v\} \in E(\dGrid{d})$ if and only if
  there exists an index $k$ such that either $u = v + \unit{k}$ or $v = u + \unit{k}$.

  We define the \emph{simplicial order} $\ls$ on $V(\dGrid{d})$
  by setting $(u_{1}, \dots, u_{d}) \ls (v_{1}, \dots, v_{d})$
  if and only if either $\wei(u) < \wei(v)$, or $\wei(u) = \wei(v)$ and
  there exists an index $j$ such that $u_{j} > v_{j}$ and $u_{i} = v_{i}$ for all $i < j$.
  Intuitively, vertices are ordered in the simplicial order by increasing weight and
  anti-lexicographically with each weight class~\cite{WWD09}.
  For example, the vertices of $P_{2} \cp P_{3} \cp P_{3}$ are ordered as follows:
  $(0,0,0) \ls (1,0,0) \ls (0,1,0) \ls (0,0,1) \ls (1,1,0) \ls (1,0,1) \ls (0,2,0) \ls (0,1,1) \ls (0,0,2) \ls (1,2,0) \ls (1,1,1) \ls (1,0,2) \ls (0,2,1) \ls (0,1,2) \ls (1,2,1) \ls (1,1,2) \ls (0,2,2) \ls (1,2,2)$.
  We also define $\les$, $\gs$, and $\ges$ naturally.
  Moghadam~\cite{Mog83,Mog05} and Bollob\'{a}s and Leader~\cite{BI91a,BI91b}
  showed independently that the simplicial order is isoperimetric for grids.
  \begin{thm}
   [\cite{BI91a,BI91b,Mog83,Mog05}]
   Let $n_{1} \le \dots \le n_{d}$.
   Then the simplicial order on $V(\dGrid{d})$ is isoperimetric for $\dGrid{d}$.
  \end{thm}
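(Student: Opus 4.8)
The plan is to verify directly the two defining properties of an isoperimetric order for the initial segments $I_{k}$ of the simplicial order: the optimality $|\Bndry(I_{k})| = \nBndry(k)$ and the nesting $I_{k} \cup \Bndry(I_{k}) = I_{k + |\Bndry(I_{k})|}$. Throughout I would use that $\dGrid{d}$ is bipartite with respect to weight parity: since adjacent vertices differ by a single $\unit{i}$, every edge joins a vertex of weight $w$ to one of weight $w \pm 1$, so boundaries can be analysed layer by layer. Writing $L_{w}$ for the set of vertices of weight exactly $w$, each $I_{k}$ consists of all of $L_{0}, \dots, L_{w-1}$ together with an anti-lexicographic initial segment $A$ of $L_{w}$, for the appropriate $w$; set $B = L_{w} \setminus A$.

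I would dispose of the nesting property first, as it is essentially structural. Every vertex of weight $w \ge 1$ has a neighbour of weight $w-1$ (subtract $\unit{i}$ in any positive coordinate), and all of $L_{w-1} \subseteq I_{k}$, so every vertex of $B$ lies in $\Bndry(I_{k})$. A vertex $v$ of weight $w+1$ lies in $\Bndry(I_{k})$ if and only if some neighbour $v - \unit{i}$ lies in $A$, i.e. iff $v$ lies in the upper shadow $\Bndry^{+}(A) \subseteq L_{w+1}$; vertices of weight $\le w-1$ already belong to $I_{k}$, and vertices of weight $\ge w+2$ have no neighbour in $I_{k}$. Hence $\Bndry(I_{k}) = B \cup \Bndry^{+}(A)$. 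Since $B$ is exactly the block of vertices immediately following $I_{k}$ in the simplicial order, nesting reduces to the shadow lemma: if $A$ is an anti-lexicographic initial segment of $L_{w}$, then $\Bndry^{+}(A)$ is an anti-lexicographic initial segment of $L_{w+1}$. I would prove this by a coordinatewise exchange argument: given $v \in \Bndry^{+}(A)$ and $v' \ls v$ with $\wei(v') = w+1$, I produce a predecessor of $v'$ lying in $A$ by comparing the two vectors at the first index where they differ and shifting weight toward the low-index coordinates, which keeps a vector inside the shadow.

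For optimality I would use compression. For each direction $i$ define $C_{i}(S)$ by replacing, on every line parallel to $\unit{i}$, the trace of $S$ by the initial interval $\{0,1,\dots,t-1\}$ of the same size $t$; this preserves $|S|$ and pushes the set toward the origin. I would then argue that iterating the operators $C_{i}$, together with direction-exchange compressions that move mass from a longer coordinate to a shorter one (legitimate because $n_{1} \le \dots \le n_{d}$), drives an arbitrary $S$ to $I_{|S|}$ without ever increasing the boundary. This yields $|\Bndry(S)| \ge |\Bndry(I_{|S|})|$ for all $S$, hence $\nBndry(k) = |\Bndry(I_{k})|$, which is the first required property; combined with the nesting already established, the simplicial order is isoperimetric.

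The hard part is showing that $C_{i}$ does not increase the vertex boundary, i.e. $|\Bndry(C_{i}(S))| \le |\Bndry(S)|$. Unlike edge boundaries, vertex boundaries couple the directions, so this is not a one-line monotonicity statement: after compressing in direction $i$ one must carefully account for the boundary contributions created in every other direction, comparing the fibres of $S$ and $C_{i}(S)$ line by line. A secondary difficulty is that coordinate compression alone does not single out the extremal set, since monotone staircase sets of a given size are not unique; the between-direction exchanges are therefore genuinely needed, and I must check both that they are boundary-nonincreasing and that the whole process terminates precisely at the simplicial initial segment rather than at some other stable configuration.
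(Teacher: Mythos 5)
First, a point of reference: the paper does not prove this theorem at all --- it is imported verbatim from the cited works of Bollob\'{a}s--Leader and Moghadam --- so the only meaningful comparison is with those proofs, which are indeed compression arguments of the general shape you describe. Your treatment of the nesting property is essentially sound: the identity $\Bndry(I_{k}) = B \cup \Bndry^{+}(A)$ is correct, and reducing nesting to the statement that the upper shadow of an anti-lexicographic initial segment of a weight level is an anti-lexicographic initial segment of the next level is the right move, although that shadow lemma is itself a Kruskal--Katona-type statement which your ``coordinatewise exchange argument'' only gestures at rather than proves.

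The genuine gap is in the optimality half. You propose one-dimensional compressions $C_{i}$ that replace the trace of $S$ on each line parallel to $\unit{i}$ by an initial interval, supplemented by unspecified ``direction-exchange'' moves. There are two problems. First, the claim $|\Bndry(C_{i}(S))| \le |\Bndry(S)|$ is exactly the content-bearing step, and you supply no argument for it; your observation that vertex boundaries couple the directions is a reason the step is hard, not a proof of it. Second, and more structurally, the proofs in the literature do not use line compressions: Bollob\'{a}s and Leader compress codimension-one sections, replacing each $i$-section of $S$ (a subset of a $(d-1)$-dimensional grid) by the same-size initial segment of the simplicial order on that subgrid, and the whole argument is an induction on $d$ in which the $(d-1)$-dimensional isoperimetric theorem is precisely what makes the compression boundary-nonincreasing. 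Sets stable under all line compressions are merely down-sets (staircase sets), of which there are many with strictly larger boundary than $I_{|S|}$, so your ``direction-exchange compressions'' would have to carry the entire weight of the theorem; they are neither defined precisely, nor shown to be boundary-nonincreasing, nor shown to terminate at the simplicial initial segment rather than at some other stable configuration. As it stands the proposal is a plan whose two hardest steps are acknowledged but not carried out.
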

  \begin{cor}
   $\vbw(\dGrid{d}) = \pw(\dGrid{d}) = \bw(\dGrid{d})$.
  \end{cor}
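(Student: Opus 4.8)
The plan is simply to chain together the two results immediately preceding the statement. The preceding corollary (stemming from Harper's theorem) asserts that whenever a graph $G$ admits an isoperimetric ordering on $V(G)$, the three parameters coincide, namely $\vbw(G) = \pw(G) = \bw(G)$. Thus the only thing left to verify is that the grid $\dGrid{d}$ possesses such an ordering.

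First I would invoke the preceding theorem, which states that, under the standing assumption $n_{1} \le \dots \le n_{d}$, the simplicial order $\ls$ on $V(\dGrid{d})$ is isoperimetric for $\dGrid{d}$. This exactly supplies the hypothesis required by the corollary above. Applying that corollary to $G = \dGrid{d}$ then yields $\vbw(\dGrid{d}) = \pw(\dGrid{d}) = \bw(\dGrid{d})$, as claimed.

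The main---and in fact only---substantive ingredients, that an isoperimetric ordering forces the three widths to agree and that the simplicial order is isoperimetric for grids, are already established, so I do not expect any real obstacle here; the statement is an immediate specialization of the two cited results. The one point worth keeping in mind is that the isoperimetric theorem is phrased for factors listed in nondecreasing order $n_{1} \le \dots \le n_{d}$, which is precisely the convention built into the definition of $\dGrid{d}$; since the Cartesian product is commutative up to isomorphism, assuming this ordering entails no loss of generality for an arbitrary product of paths.
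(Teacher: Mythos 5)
Your proof is correct and matches the paper's intended argument exactly: the corollary is an immediate combination of the Moghadam/Bollob\'{a}s--Leader theorem (the simplicial order is isoperimetric for $\dGrid{d}$) with the earlier corollary that an isoperimetric ordering forces $\vbw(G) = \pw(G) = \bw(G)$. The paper gives no further detail, so nothing is missing.
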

  Riordan~\cite{Rio98} showed that even tori have an isoperimetric order.
  Thus, we also have the following equivalence. 
  \begin{cor}
   $\vbw(\dTorus{d}) = \pw(\dTorus{d}) = \bw(\dTorus{d})$.
  \end{cor}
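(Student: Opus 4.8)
The plan is to reduce this statement to the general Corollary established earlier in the preliminaries, which asserts that any graph admitting an isoperimetric ordering on its vertex set satisfies $\vbw = \pw = \bw$. Since that Corollary already packages together all three inequalities $\vbw(G) \le \pw(G) \le \bw(G)$ together with Harper's equality $\vbw(G) = \bw(G)$, the only thing left to supply for the even torus is the existence of an isoperimetric order on its vertex set.

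Accordingly, I would first invoke Riordan's result, cited immediately above the statement, which guarantees that the even torus $\dTorus{d}$ possesses an isoperimetric order. This is the substantive ingredient: it is the analogue, for even tori, of the simplicial-order theorem used to establish the corresponding equality for grids. I would then apply the Corollary directly to the graph $G = \dTorus{d}$, obtaining $\vbw(\dTorus{d}) = \pw(\dTorus{d}) = \bw(\dTorus{d})$ in a single line.

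The main obstacle does not lie in this derivation, which is entirely mechanical, but is instead concealed inside Riordan's theorem: constructing an explicit isoperimetric order on $\dTorus{d}$ and verifying both the minimality $|\Bndry(I_{k})| = \nBndry(k)$ and the nesting condition $I_{k} \cup \Bndry(I_{k}) = I_{k + |\Bndry(I_{k})|}$ for every $k$. Because we are entitled to cite that result, the present Corollary follows at once, exactly as its grid counterpart did from the simplicial-order theorem.
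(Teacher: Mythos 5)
Your proposal is correct and follows exactly the paper's route: the paper likewise cites Riordan's theorem that even tori admit an isoperimetric order and then applies the earlier corollary stating that any graph with an isoperimetric ordering satisfies $\vbw = \pw = \bw$. Nothing further is needed.
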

  However, we do not need to give a formal definition of Riordan's ordering.
  This is because of the equivalence of the problem on even tori and grids.
  Recently, Bezrukov and Leck~\cite{BL09} have proved that the VIP on $d$-dimensional even tori
  is equivalent to the VIP on some $2d$-dimensional grids.
  \begin{thm}
   [\cite{BL09}]
   Let $G = \dTorus{d}$ and $H = P_{2}^{d} \cp \dGrid{d}$.
   Then, $\nBndry_{G}(m) = \nBndry_{H}(m)$ for $1 \le m \le |V(G)|$.
  \end{thm}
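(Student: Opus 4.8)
The plan is to realize $H$ as $G$ with extra edges added under an explicit vertex bijection, read off one inequality for free, and then do the real work of showing that the extra edges are isoperimetrically redundant. For each factor define $\phi_i \colon V(C_{2n_i}) \to V(P_2 \cp P_{n_i})$ by $\phi_i(j) = (0,j)$ for $0 \le j \le n_i - 1$ and $\phi_i(j) = (1, 2n_i - 1 - j)$ for $n_i \le j \le 2n_i - 1$, and set $\phi = \prod_{i=1}^d \phi_i$. Each $\phi_i$ is a vertex bijection sending every edge of $C_{2n_i}$ to an edge of $P_2 \cp P_{n_i}$ (the two halves of the cycle fold onto the two rails, the two ``creases'' becoming the end rungs), so the product map $\phi$ is a vertex bijection $V(G) \to V(H)$ carrying every edge of $G$ to an edge of $H$. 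Identifying $V(G)$ with $V(H)$ through $\phi$, we get $E(G) \subseteq E(H)$, the members of $E(H) \setminus E(G)$ being exactly the ``middle rungs'' $\{(0,j),(1,j)\}$ with $1 \le j \le n_i - 2$ in some coordinate $i$ (fixed elsewhere). Adding edges cannot shrink a boundary, so $|\Bndry_H(S)| \ge |\Bndry_G(S)|$ for every $S$, and minimizing over $|S| = m$ gives the easy inequality $\nBndry_G(m) \le \nBndry_H(m)$.

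The content of the theorem is the reverse inequality $\nBndry_H(m) \le \nBndry_G(m)$, for which I would exhibit, for each $m$, a set $T \subseteq V(H)$ with $|T| = m$ and $|\Bndry_H(T)| \le \nBndry_G(m)$. The candidate is $T = \phi(J)$ for a carefully chosen \emph{optimal} torus set $J$ (so $|J| = m$ and $|\Bndry_G(J)| = \nBndry_G(m)$). Since $\phi$ is an edge-preserving bijection we always have $\phi(\Bndry_G(J)) \subseteq \Bndry_H(\phi(J))$, so it suffices to pick $J$ for which the middle rungs contribute no new boundary vertex, i.e.\ $\Bndry_H(\phi(J)) = \phi(\Bndry_G(J))$. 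The one-dimensional picture motivates the choice: an arc of $C_{2n}$ placed symmetrically across a crease folds to a solid block $\{0,1\} \times \{n-t, \dots, n-1\}$ at one end of the ladder, whose boundary $\{(0,n-t-1),(1,n-t-1)\}$ is reached only along the rails, so no middle rung crosses it; an arc lying on a single rail folds badly and gains one new boundary vertex per middle rung. As all arcs of a given length in $C_{2n}$ have the same boundary, one is free to take the crease-centred arc.

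I would therefore introduce, for each coordinate $i$, a \emph{fold-compression} operator that rearranges a set within each cycle-$i$ fibre into a crease-centred arc, and split the argument into two lemmas. Lemma~A: fold-compression does not increase $|\Bndry_G(\cdot)|$, so iterating the compressions over all coordinates turns any optimal $J$ into a fully fold-compressed set of the same size that is still optimal; the existence of an isoperimetric order for even tori (Riordan's theorem, quoted above) underwrites the claim that a stable fully-compressed optimum exists. Lemma~B: for a fully fold-compressed $J$ one has $\Bndry_H(\phi(J)) = \phi(\Bndry_G(J))$, because in each coordinate the fibres of $\phi(J)$ are end-blocks of the ladder whose boundaries lie on the rails, leaving every middle rung entirely inside or entirely outside $\phi(J)$. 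Combining the two lemmas with the easy inequality yields $\nBndry_H(m) = \nBndry_G(m)$ for all $m$.

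The main obstacle is Lemma~A in the multi-coordinate setting: one must show that fold-compression in coordinate $i$ never increases the torus boundary \emph{and} that the compressions in different coordinates can be driven to a common stable fixed point, so that a single set is simultaneously crease-centred in all $d$ directions while remaining optimal. One-coordinate isoperimetric compression handles each direction in isolation, but proving that the process stabilises, and that the interacting middle rungs across different coordinates still contribute nothing in Lemma~B, is the delicate and technically heaviest part of the argument.
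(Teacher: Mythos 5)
First, a point of order: the paper does not prove this statement --- it is quoted verbatim from Bezrukov and Leck~\cite{BL09}, so there is no in-paper proof to compare against. Judged on its own terms, your proposal has the right architecture, and it is essentially the strategy one would expect for this equivalence. The folding map $\phi$ is correct: it is a vertex bijection under which $E(G)$ becomes a subgraph of $E(H)$ with the middle rungs as the surplus edges, and since adding edges can only enlarge $\Bndry(S)$ for a fixed $S$, the inequality $\nBndry_G(m) \le \nBndry_H(m)$ follows exactly as you say. Your reduction of the converse to exhibiting, for each $m$, an optimal torus set $J$ whose fibres in every cycle direction are crease-centred arcs, together with the verification (which you carry out correctly in the one-dimensional picture, and which transfers fibre-by-fibre) that for such $J$ no middle rung contributes a boundary vertex not already produced by a cycle edge, is sound.

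The genuine gap is that the entire isoperimetric content of the theorem sits inside your Lemma~A, and you do not prove it; worse, you locate the difficulty in the wrong place and prop the lemma up with a circular citation. The stabilisation of the iterated compressions is the \emph{routine} part: any nontrivial compression strictly decreases a potential such as $\sum_{v \in J} \mathrm{rank}(v)$ for a global order compatible with the fibre orders, so the process terminates at a set compressed in all $d$ directions simultaneously; no appeal to Riordan's theorem is needed, and such an appeal would be circular here, since the point of~\cite{BL09} is precisely to give a new proof of the Karakhanyan--Riordan theorem via this equivalence. The hard step you gloss over is the single-coordinate inequality itself: that replacing $J \cap F$ by a centred arc of the same size in every $i$-fibre $F$ does not increase $|\Bndry_G(J)|$. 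This is a vertex-boundary compression lemma in the style of Bollob\'{a}s and Leader, and its proof requires the nested family of centred arcs in $C_{2n_i}$ to satisfy the closure property $I_t \cup \Bndry(I_t) = I_{t+|\Bndry(I_t)|}$ and a careful fibre-by-fibre accounting of how boundary vertices move under compression; none of this is supplied. Until that lemma is proved, the hard direction $\nBndry_H(m) \le \nBndry_G(m)$ remains open, so the proposal is a correct plan rather than a proof.
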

  \begin{cor}
   \label{cor:pw_equivalence}
   $\vbw(\dTorus{d}) = \vbw(P_{2}^{d} \cp \dGrid{d})$.
  \end{cor}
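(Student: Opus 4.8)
The plan is to read the result straight off the preceding theorem of Bezrukov and Leck, since $\vbw$ is by definition a maximum of the boundary profile $\nBndry(\cdot)$ and that theorem equates the two profiles pointwise. Writing $G = \dTorus{d}$ and $H = P_{2}^{d} \cp \dGrid{d}$, the definition gives $\vbw(G) = \max_{1 \le k \le |V(G)|} \nBndry_{G}(k)$ and $\vbw(H) = \max_{1 \le k \le |V(H)|} \nBndry_{H}(k)$, so everything reduces to comparing these two maxima.

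First I would check that the two index ranges coincide, i.e.\ that $|V(G)| = |V(H)|$. This is a short count: the cycle $C_{2n_{i}}$ has $2n_{i}$ vertices, so $|V(G)| = \prod_{i=1}^{d} 2n_{i} = 2^{d}\prod_{i=1}^{d} n_{i}$; likewise $P_{2}^{d}$ contributes a factor $2^{d}$ and $\dGrid{d}$ contributes $\prod_{i=1}^{d} n_{i}$, so $|V(H)| = 2^{d}\prod_{i=1}^{d} n_{i}$ as well. Hence both maxima range over exactly the same set $\{1, \dots, |V(G)|\}$.

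Then I would invoke the cited theorem, which asserts $\nBndry_{G}(m) = \nBndry_{H}(m)$ for every $m$ in this common range. Taking the maximum over $m$ on both sides yields $\vbw(G) = \vbw(H)$ at once.

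Honestly, there is no real obstacle here: once the vertex counts are seen to agree, the corollary is a one-line consequence of the pointwise equality of boundary profiles. The only point worth stating explicitly is that matching the ranges of the two maxima is necessary, since $\vbw$ is a maximum over all $k$ up to the vertex count, and a mismatch in the upper limit (rather than in the values $\nBndry(k)$ themselves) would be the sole way the two sides could differ.
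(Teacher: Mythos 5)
Your proposal is correct and matches the paper's (implicit) reasoning: the corollary is stated as an immediate consequence of the Bezrukov--Leck theorem, obtained by taking the maximum of the pointwise-equal boundary profiles, and your verification that $|V(G)| = |V(H)| = 2^{d}\prod_{i=1}^{d} n_{i}$ is the only detail worth checking. Nothing further is needed.
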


  From the above observations, the problems of determining the bandwidth and the pathwidth of grids and tori
  are solvable by determining the vertex boundary width of grids.
  Thus, in what follows, we only consider the problems on grids,
  and we identify the three parameters of grids.

  Note that the problem of determining the vertex boundary width is not trivial
  even if an isoperimetric order is known.
  For example, Harper~\cite{Har66} showed that the simplicial order on hypercubes $Q_{d} = P_{2}^{d}$ is isoperimetric.
  He stated without proof that it can be shown by induction that $\vbw(Q_{d}) = \sum_{k = 0}^{d-1} \binom{k}{\floor{k/2}}$.
  In his recent book~\cite{Har04}, Harper gave an exercise to prove the above equation with a remark ``surprisingly difficult.''
  Recently, Wang, Wu, and Dumitrescu~\cite{WWD09} have given the first explicit proof of the equation.

  We have one more motivation to determine the vertex boundary width of grids.
  Bollob\'{a}s and Leader~\cite{BI91a} showed the following fact.\footnote{
  In their paper, the theorem was stated in a more general form.
  }
  \begin{lem}
   [\cite{BI91a}] 
   Let $G_{i}$ be a connected graph of $n_{i}$ vertices for $1 \le i \le d$.
   Then, $\vbw(\prod_{i=1}^{d} G_{i}) \ge \vbw(\dGrid{d})$.
  \end{lem}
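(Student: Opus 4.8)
The plan is to prove the stronger pointwise statement that $\nBndry_{\prod_{i=1}^{d} G_{i}}(k) \ge \nBndry_{\dGrid{d}}(k)$ for every $k$; taking the maximum over $k$ then yields $\vbw(\prod_{i=1}^{d}G_{i}) \ge \vbw(\dGrid{d})$. Since the Cartesian product is associative and commutative up to isomorphism and $\nBndry$ is an isomorphism invariant, it suffices to replace the factors by paths one at a time. Concretely, I would establish the following single-factor replacement as the key lemma: for every graph $B$, every connected graph $H$ on $m$ vertices, and every $k$, $\nBndry_{B \cp H}(k) \ge \nBndry_{B \cp P_{m}}(k)$. Applying this with $B = P_{n_{1}} \cp \dots \cp P_{n_{j-1}} \cp G_{j+1} \cp \dots \cp G_{d}$ and $H = G_{j}$ for $j = 1, \dots, d$ in turn, and chaining the resulting inequalities, converts $\prod_{i=1}^{d} G_{i}$ into $\dGrid{d}$ while only decreasing $\nBndry(k)$ at each step.

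For the key lemma, I would view $B \cp H$ as $|V(H)|$ disjoint copies $B_{x}$ of $B$, with $B_{x}$ and $B_{y}$ joined by a perfect matching exactly when $\{x,y\} \in E(H)$. Writing $S_{x} = \{b \in V(B) \mid (b,x) \in S\}$ for the slice of a set $S$ in copy $x$, and $N_{x} = \bigcup_{\{x,y\} \in E(H)} S_{y}$ for the vertices of $B$ occupied by $S$ in some neighbouring copy, the boundary decomposes copy by copy as $|\Bndry_{B \cp H}(S)| = \sum_{x \in V(H)} |(\Bndry_{B}(S_{x}) \cup N_{x}) \setminus S_{x}|$, since the copies partition $V(B \cp H)$. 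Two reductions then simplify $H$. First, deleting edges of $H$ can only delete boundary vertices, so $\nBndry_{B \cp H}(k) \ge \nBndry_{B \cp T}(k)$ for any spanning tree $T$ of $H$; hence it suffices to treat the case that $H$ is a tree. Second, I would induct on $m$: choosing a leaf $z$ of the tree with unique neighbour $w$, the copy $B_{z}$ is matched only to $B_{w}$, so $N_{z} = S_{w}$ and $z$ contributes only to $N_{w}$; this isolates the leaf and lets me compare $B \cp T$ with $B \cp T'$, where $T' = T - z$, against the analogous attachment of an endpoint copy in $B \cp P_{m}$ versus $B \cp P_{m-1}$.

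The main obstacle is exactly this comparison --- that, among all connected (equivalently, by the first reduction, all tree) factors on $m$ vertices, the path is extremal. Intuitively the path is the \emph{thinnest} connected graph: each interior copy is matched to only two neighbours and an endpoint copy to only one, so attaching a slice to the end of a path forces the least possible growth of the boundary, whereas attaching it at a high-degree vertex of a tree couples it to more neighbouring slices and can only enlarge the union $\Bndry_{B}(S_{x}) \cup N_{x}$. Making this precise is the heart of the argument and requires a compression step: given a boundary-minimal $S$ in $B \cp T$, one rearranges its slices --- by repeatedly replacing a pair of slices $S_{x}, S_{y}$ by the nested pair $S_{x} \cup S_{y}$ and $S_{x} \cap S_{y}$, which preserves $|S|$ since $|S_{x} \cup S_{y}| + |S_{x} \cap S_{y}| = |S_{x}| + |S_{y}|$ --- until the slices form a chain under inclusion, and shows that this operation does not increase the right-hand side of the boundary decomposition. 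Once the slices are nested they can be laid along $P_{m}$ in decreasing order, at which point the boundary telescopes and matches $\nBndry_{B \cp P_{m}}(k)$. Verifying that the slice-compression and the subsequent transfer to the path never increase the boundary functional is the only nontrivial point; the induction base $m \le 2$, where the only tree on $m$ vertices is $P_{m}$ itself, is trivial, and the case of trivial $B$ reduces to the statement that a connected graph on $m$ vertices has at least one boundary vertex for every proper nonempty subset, i.e.\ $\nBndry_{H}(k) \ge 1 = \nBndry_{P_{m}}(k)$ for $1 \le k \le m-1$.
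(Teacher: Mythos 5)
First, note that the paper does not actually prove this lemma --- it is quoted from Bollob\'{a}s and Leader \cite{BI91a} --- so your attempt can only be measured against the argument in that reference. Your high-level plan is the right one and is essentially theirs: prove the pointwise inequality $\nBndry_{\prod_{i=1}^{d}G_{i}}(k) \ge \nBndry_{\dGrid{d}}(k)$ by replacing one factor at a time with a path via a boundary-non-increasing compression. The gap is in the single-factor step, which you yourself flag as ``the only nontrivial point'' but do not close, and which I do not believe closes as stated. You slice $B \cp H$ by the vertices of $H$ (the factor being replaced) and propose to make the slices $S_{x} \subseteq V(B)$ nested by repeated union/intersection swaps. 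But if $x$ receives $S_{x} \cup S_{y}$ and $y$ receives $S_{x} \cap S_{y}$, then every copy $z$ adjacent to $x$ but not to $y$ sees its set $N_{z}$ grow from containing $S_{x}$ to containing $S_{x} \cup S_{y}$, so the term $|(\Bndry_{B}(S_{z}) \cup N_{z}) \setminus S_{z}|$ at $z$ can strictly increase, and nothing in your sketch forces the decreases at neighbours of $y$ to compensate; submodularity of $A \mapsto |\Bndry_{B}(A)|$ controls the terms at $x$ and $y$ themselves but not these third-party terms. The final step --- that nested slices laid along $P_{m}$ in decreasing order have boundary no larger than on the tree --- is likewise asserted rather than proved, and the leaf-removal induction does not obviously mesh with it, since the leaf's slice may need to be inserted in the interior of the eventual path order.

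The fix is to slice the other way around, which collapses the whole step to a term-by-term computation with no spanning-tree reduction, no induction on $m$, and no nestedness argument. Write the product as $G \cp B$ with $G$ the connected factor on $n$ vertices to be replaced, and for $b \in V(B)$ set $A_{b} = \{g \in V(G) \mid (g,b) \in A\}$, $s_{b} = |A_{b}|$, and $N_{b} = \bigcup_{b' \sim b} A_{b'}$, so that $|\Bndry_{G \cp B}(A)| = \sum_{b} |(\Bndry_{G}(A_{b}) \cup N_{b}) \setminus A_{b}|$. Define $C(A) \subseteq V(P_{n} \cp B)$ by replacing each $A_{b}$ with the initial segment $\{0,\dots,s_{b}-1\}$ of $P_{n}$. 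Then $|C(A)| = |A|$, and the $b$-th term for $C(A)$ equals $\max(1,\, \max_{b' \sim b} s_{b'} - s_{b})$ if $0 < s_{b} < n$, equals $\max_{b' \sim b} s_{b'}$ if $s_{b} = 0$, and equals $0$ if $s_{b} = n$. Each of these is dominated by the corresponding term for $A$, using only that $|\Bndry_{G}(A_{b})| \ge 1$ whenever $0 < s_{b} < n$ (connectivity of $G$) and that $|N_{b} \setminus A_{b}| \ge |N_{b}| - |A_{b}| \ge \max_{b' \sim b} s_{b'} - s_{b}$. Iterating over the $d$ factors and taking the maximum over $k$ then gives $\vbw(\prod_{i=1}^{d} G_{i}) \ge \vbw(\dGrid{d})$ exactly as you intended.
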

  Hence, we can use $\vbw(\dGrid{d})$ as a general lower bound on the bandwidth and the pathwidth of
  any $d$-dimensional (connected) graphs.
  Note that given a connected graph $G$, its prime factors $G_{1}, \cdots, G_{d}$ 
  such that $G = \prod_{i=1}^{d} G_{i}$ can be determined in linear time~\cite{IP07}.


 \section{Grids with relatively large maximum factors}
 \label{sec:large}
 Although our main target is the three-dimensional case,
 we investigate arbitrary dimension cases here.
 We show that, for $n_{1} \le \dots \le n_{d}$, if $n_{d}$ is at least $\sum_{i=1}^{d-1} (n_{i} - 1)$
 then the vertex boundary width of $\dGrid{d}$ is $\prod_{i=1}^{d-1} n_{i}$.
 That is, we prove the following theorem.
 \begin{thm}
  \label{thm:grid_large_max}
  If $n_{1} \le \dots \le n_{d}$ and $\sum_{i=1}^{d-1} (n_{i} - 1) \le n_{d}$, then
  \[
  \bw\left(\dGrid{d}\right) = \pw\left(\dGrid{d}\right) = \prod_{i=1}^{d-1} n_{i}.
  \]
 \end{thm}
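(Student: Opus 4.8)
The plan is to use the corollary $\vbw(\dGrid{d}) = \pw(\dGrid{d}) = \bw(\dGrid{d})$, so that it suffices to prove $\vbw(\dGrid{d}) = P$, where I abbreviate $P = \prod_{i=1}^{d-1} n_i$ and $M = \sum_{i=1}^{d-1}(n_i - 1)$; the hypothesis then reads simply $M \le n_d$. I would establish the two inequalities $\vbw(\dGrid{d}) \le P$ and $\vbw(\dGrid{d}) \ge P$ separately, the first by exhibiting a good ordering and the second by exhibiting a single hard cardinality $k^{*}$ under the simplicial order.

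For the upper bound I would slice $\dGrid{d}$ into the $n_d$ slabs $T_j = \{v : v_d = j\}$, each a copy of $\prod_{i=1}^{d-1} P_{n_i}$ of size $P$, and fill the vertices slab by slab (first all of $T_0$, then $T_1$, and so on). For an initial segment $S = T_0 \cup \dots \cup T_{j-1} \cup A$ with $A \subseteq T_j$ and $1 \le j \le n_d - 2$, every boundary vertex lies either in $T_{j+1}$ (exactly the copy of $A$) or in $T_j \setminus A$ (all of it, since these sit directly above the full slab $T_{j-1} \subseteq S$), so $|\Bndry(S)| = |A| + (P - |A|) = P$; the partial-first-slab and last-slab cases only make this smaller. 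Hence $\nBndry(k) \le P$ for every $k$, giving $\vbw(\dGrid{d}) \le P$. This step is routine.

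The heart of the matter is the lower bound, where the point is to choose the right witness. The naive attempt — taking $S$ to be a full weight layer under the simplicial (isoperimetric) order — fails exactly in the tight case $n_d = M$, in which every layer has size at most $P - 1$ (already for $P_2^3$ the layers have sizes $1,3,3,1$ while $P = 4$). Instead I would fill all vertices of weight at most $M-2$ and then add one more vertex, namely the simplicial-first weight-$(M-1)$ vertex $v_0$, which greedily fills the early coordinates to their maxima and leaves exactly two unsaturated coordinates: coordinate $d-1$, with value $n_{d-1}-2$, and coordinate $d$, with value $0$. Writing $L_w$ for the number of vertices of weight $w$, the hypothesis $M \le n_d$ is precisely what guarantees $L_{M-1} = P-1$ (no weight is truncated from below by the constraint $v_d \le n_d - 1$). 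Setting $k^{*} = |\{v : \wei(v) \le M-2\}| + 1$, the set $I_{k^{*}}$ is a simplicial initial segment, so by the isoperimetric theorem $\nBndry(k^{*}) = |\Bndry(I_{k^{*}})|$, and I would evaluate this boundary exactly: each of the $L_{M-1}-1$ remaining weight-$(M-1)$ vertices has a weight-$(M-2)$ neighbor already in $I_{k^{*}}$, while $v_0$ contributes exactly its two upward neighbors (in coordinates $d-1$ and $d$), with no overlap across weight levels. This yields $|\Bndry(I_{k^{*}})| = (L_{M-1}-1) + 2 = P$, whence $\vbw(\dGrid{d}) \ge \nBndry(k^{*}) = P$.

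The main obstacle is therefore conceptual rather than computational: the extremal $k$ sits one vertex \emph{inside} a near-maximal layer rather than at a layer boundary, and identifying this is what makes the tight case $n_d = M$ work at all. Once the witness is found, the proof rests only on the two elementary observations that $v_0$ has upward degree exactly $2$ and that $M \le n_d$ keeps $L_{M-1} = P-1$. I would finish by checking the degenerate small cases (namely $M = 1$, and $d = 2$, where the hypothesis holds automatically and $\{v : \wei(v) \le M-2\}$ is empty) to confirm that the same witness $k^{*}$ delivers $|\Bndry(I_{k^{*}})| = P$, and then combine the two inequalities with the opening corollary to conclude $\bw(\dGrid{d}) = \pw(\dGrid{d}) = P$.
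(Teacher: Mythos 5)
Your proposal is correct and takes essentially the same route as the paper: the lower-bound witness $v_{0}$ (early coordinates maxed out, coordinate $d-1$ at $n_{d-1}-2$, coordinate $d$ at $0$, added to the full weight-$(M-2)$ down-set) is exactly the paper's choice of $v$, and your count $(L_{M-1}-1)+2 = (P-1)+1 = P$ is the paper's $|L|+1$ computation, with the hypothesis $M \le n_{d}$ used in the same place to show the weight-$(M-1)$ layer has $P-1$ vertices. The only cosmetic difference is the upper bound, where you prove the slab estimate directly while the paper simply cites the known fact $\pw(G \cp P_{n}) \le |V(G)|$.
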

 Note that the condition $\sum_{i=1}^{d-1} (n_{i} - 1) \le n_{d}$ always holds
 for the two dimensional case.
 The following lemma implies the theorem.
 \begin{lem}
  $\vbw(\dGrid{d}) = \prod_{i=1}^{d-1} n_{i}$
  if $n_{1} \le \dots \le n_{d}$ and $\sum_{i=1}^{d-1} (n_{i} - 1) \le n_{d}$.
 \end{lem}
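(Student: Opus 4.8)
The plan is to prove the self-contained statement that $\vbw(\dGrid{d})$ equals $N := \prod_{i=1}^{d-1} n_{i}$, from which the displayed equalities follow by the earlier corollary $\vbw = \pw = \bw$. Throughout I write $M := \sum_{i=1}^{d-1}(n_{i}-1)$ for the largest weight attainable using only the first $d-1$ coordinates, so the hypothesis reads exactly $M \le n_{d}$, and $W_{j}$ for the set of vertices of weight $j$. I would establish $\vbw \le N$ and $\vbw \ge N$ separately; the hypothesis is needed only for the lower bound.

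For the upper bound no isoperimetric input is required. Slicing the grid into the $n_{d}$ parallel copies of $\prod_{i=1}^{d-1} P_{n_{i}}$ obtained by fixing the last coordinate, I take, for each $k$, the set $S$ consisting of the first few complete slices together with an arbitrary part $A$ of the next slice. Every still-missing vertex of that next slice lies in $\Bndry(S)$ because it sees the full slice immediately below it, and the vertices of $A$ contribute exactly $|A|$ further boundary vertices one slice higher up; hence $|\Bndry(S)| = N$ whenever a slice above exists and is strictly smaller otherwise. Therefore $\nBndry(k) \le N$ for every $k$, giving $\vbw \le N$ with no hypothesis at all.

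For the lower bound I would use that the simplicial order is isoperimetric for grids, so that $\nBndry(k) = |\Bndry(I_{k})|$ for the initial segment $I_{k}$, and then exhibit a single $k$ with $|\Bndry(I_{k})| = N$. The candidate is $I_{k} = \{u : u \les v^{*}\}$, where $v^{*}$ is the simplicial-first vertex of weight $M-1$; equivalently $I_{k}$ consists of all vertices of weight at most $M-2$ together with $v^{*}$. Its boundary splits into the weight-$(M-1)$ vertices other than $v^{*}$ (each of which has a lower neighbour of weight $M-2$ already in $I_{k}$) and the weight-$M$ vertices adjacent to $v^{*}$. The first part has size $|W_{M-1}| - 1$, and this is where the hypothesis enters: when $M \le n_{d}$ the long last coordinate allows every tuple of the first $d-1$ coordinates except the all-maximal one to be completed to total weight $M-1$, so $|W_{M-1}| = N-1$ and the first part contributes $N-2$. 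The second part is the number of non-maximal coordinates of $v^{*}$; since $v^{*}$ is lexicographically greatest among weight-$(M-1)$ vertices, its first $d-1$ coordinates are filled as high as the weight budget permits (leaving a single unit deficit) and $v^{*}_{d}=0$, so a short check gives exactly two non-maximal coordinates and hence exactly two upward neighbours. The two parts sum to $N$, completing the lower bound.

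I expect the main obstacle to be the lower bound, and specifically the realization that in the tight case $M = n_{d}$ — whose smallest instance is the cube $P_{2}^{3}$ — \emph{no} complete weight class reaches $N$, so one is forced onto a partial level and to an exact count of the boundary of a simplicial initial segment. The delicate points are verifying that every unselected weight-$(M-1)$ vertex genuinely lies in $\Bndry(I_{k})$ and pinning down $v^{*}$ precisely enough to see that it has exactly two non-maximal coordinates; both become routine once the role of the hypothesis $M \le n_{d}$ in forcing $|W_{M-1}| = N-1$ is made explicit.
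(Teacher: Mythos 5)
Your proposal is correct and follows essentially the same route as the paper: the lower bound uses exactly the same initial segment of the simplicial order ending at the first vertex of weight $\sum_{i=1}^{d-1}(n_i-1)-1$, the same split of its boundary into the remaining weight-class vertices plus the two upward neighbours, and the same completion-of-the-last-coordinate count of the weight class where the hypothesis enters. The only cosmetic difference is the upper bound, which the paper dispatches by citing $\pw(G\cp P_n)\le|V(G)|$ while you give the equivalent slicing argument directly on vertex boundaries.
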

 \begin{proof}
  It is known that $\pw(G \cp P_{n}) \le |V(G)|$ for any graph $G$ (see \cite{Dje08}).
  The upper bound is a direct corollary of this fact.

  Let $v \in V(\dGrid{d})$ be the first vertex of weight $\sum_{i=1}^{d-1} (n_{i} -1)-1$ in $\ls$.
  That is, $v_{i} = n_{i}-1$ for $1 \le i \le d-2$,
  $v_{d-1} = n_{d-1} - 2$, and $v_{d} = 0$.
  Let $S_{\les v} = \{u \in V(\dGrid{d}) \mid u \les v\}$.
  It suffices to show that $|\Bndry(S_{\les v})| \ge \prod_{i=1}^{d-1} n_{i}$.
  Since $v$ is the first vertex of weight $\wei(v)$,
  the set $S_{\les v}$ consists of $v$ and the vertices of weight at most $\wei(v)-1$.
  Thus, the vertices of weight $\wei(v)$ other than $v$ are in $\Bndry(S_{\les v})$.
  It is easy to see that $v$ has two neighbors of weight $\wei(v) + 1$,
  that is, $\{v + \unit{i} \mid i = d-1, d\}$.
  Hence, we have
  \begin{align*}
   |\Bndry(S_{\les v})|
   &= \left|\left\{u \in V\left(\dGrid{d}\right) \mid \wei(u) = \wei(v), u \ne v\right\} \cup \{v + \unit{i} \mid i = d-1, d\} \right|
   \\
   &= \left|\left\{u \in V\left(\dGrid{d}\right) \mid \wei(u) = \wei(v)\right\}\right| + 1.
  \end{align*}
  Let $L = \left\{u \in V\left(\dGrid{d}\right) \mid \wei(u) = \wei(v)\right\}$.
  Then, $|\Bndry(S_{\les v})| = |L| + 1$.

  We shall present a bijection between $L$ and $V(\dGrid{d-1}) \setminus \{z\}$,
  where $z$ is the last vertex of $\dGrid{d-1}$, that is,
  $z$ is the only vertex of weight $\sum_{i=1}^{d-1} (n_{i}-1)$ in $\dGrid{d-1}$.
  This implies that $|L| = |V(\dGrid{d-1})| - 1 = \prod_{i=1}^{d-1} n_{i} - 1$, as required.
  For $u \in V(\dGrid{d-1}) \setminus \{z\}$, we define $u'$ as follows:
  \[
   u'_{i} =
   \begin{cases}
    u_{i} & \IF 1 \le i \le d-1, \\
    \wei(v) - \wei(u) & \IF i = d.
   \end{cases}
  \]
  Obviously, $u' \in L$ implies $u \in V(\dGrid{d-1}) \setminus \{z\}$.
  We show that $u' \in L$ if $u \in V(\dGrid{d-1}) \setminus \{z\}$.
  It suffices to show that $0 \le u_{d}' = \wei(v) - \wei(u) \le n_{d}-1$.
  Since $\wei(u) \le \sum_{i=1}^{d-1} (n_{i}-1) - 1 = \wei(v)$, $\wei(v) - \wei(u) \ge 0$.
  Since $\sum_{i=1}^{d-1} (n_{i}-1) \le n_{d}$, $\wei(v) \le n_{d} - 1$, and so,
  $\wei(v) - \wei(u) \le n_{d} - 1$.
 \end{proof}
 As a corollary, we also have the following theorem for multi-dimensional even tori with relatively large maximum factors.
 \begin{thm}
  If $n_{1} \le \dots \le n_{d}$ and $\sum_{i=1}^{d-1} n_{i} \le n_{d} - 1$, then
  \[
  \bw\left(\dTorus{d}\right) = \pw\left(\dTorus{d}\right)= 2^{d} \prod_{i=1}^{d-1} n_{i}.
  \]
 \end{thm}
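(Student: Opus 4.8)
The plan is to reduce the statement about even tori to the grid computation already carried out, using the isoperimetric equivalence between tori and grids. Since even tori admit an isoperimetric order (Riordan), the three parameters coincide, so it suffices to determine $\vbw(\dTorus{d})$; and by Corollary~\ref{cor:pw_equivalence} this quantity equals $\vbw(P_{2}^{d} \cp \dGrid{d})$. Thus the entire theorem reduces to computing the vertex boundary width of the auxiliary graph $P_{2}^{d} \cp \dGrid{d}$.

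The key observation is that $P_{2}^{d} \cp \dGrid{d}$ is itself a grid, namely a $2d$-dimensional grid whose factors are $d$ copies of $P_{2}$ together with $P_{n_{1}}, \dots, P_{n_{d}}$. First I would reorder these $2d$ factors nondecreasingly by size. Since $n_{1} \le \dots \le n_{d}$ and each $n_{i} \ge 2$, the largest factor is $P_{n_{d}}$ (and, under the hypothesis, it is the strict maximum whenever $d \ge 2$, as then $n_{d} \ge 1 + \sum_{i=1}^{d-1} n_{i} \ge 3$). To invoke the lemma implying Theorem~\ref{thm:grid_large_max}, I must verify that this grid has a relatively large maximum factor, i.e.\ that the sum of $(\textrm{size}-1)$ over all factors other than $P_{n_{d}}$ is at most $n_{d}$.

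Computing that sum is the one place requiring care: the $d$ copies of $P_{2}$ contribute $d \cdot (2-1) = d$, while $P_{n_{1}}, \dots, P_{n_{d-1}}$ contribute $\sum_{i=1}^{d-1}(n_{i}-1) = \sum_{i=1}^{d-1} n_{i} - (d-1)$. These add to $\sum_{i=1}^{d-1} n_{i} + 1$, so the large-maximum condition reads $\sum_{i=1}^{d-1} n_{i} + 1 \le n_{d}$, which is exactly the hypothesis $\sum_{i=1}^{d-1} n_{i} \le n_{d}-1$. Hence the lemma applies and yields $\vbw(P_{2}^{d} \cp \dGrid{d})$ equal to the product of the sizes of all factors except $P_{n_{d}}$, namely $2^{d} \prod_{i=1}^{d-1} n_{i}$. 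Combining this with the reduction above gives $\bw(\dTorus{d}) = \pw(\dTorus{d}) = 2^{d} \prod_{i=1}^{d-1} n_{i}$.

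I do not anticipate a genuine obstacle: the argument is essentially a translation followed by an application of the already-proved grid lemma. The only subtlety is the arithmetic bookkeeping showing that the torus hypothesis converts \emph{precisely} into the grid's large-maximum-factor condition, together with a check of the degenerate case $d = 1$, where the empty product correctly gives $\vbw(C_{2n_{1}}) = 2$.
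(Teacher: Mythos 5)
Your proposal is correct and follows essentially the same route as the paper: reduce via Corollary~\ref{cor:pw_equivalence} to $\vbw(P_{2}^{d} \cp \dGrid{d})$, check that the hypothesis $\sum_{i=1}^{d-1} n_{i} \le n_{d}-1$ is exactly the large-maximum-factor condition $\sum_{i=1}^{d}(2-1) + \sum_{i=1}^{d-1}(n_{i}-1) \le n_{d}$ for this $2d$-dimensional grid, and apply Theorem~\ref{thm:grid_large_max}. Your extra remarks (that $P_{n_d}$ is indeed the maximum factor, and the $d=1$ case) are harmless refinements of the same argument.
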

 \begin{proof}
  Since $\sum_{i=1}^{d-1} n_{i} \le n_{d} - 1$, we have that $\sum_{i=1}^{d} (2-1) + \sum_{i=1}^{d-1} (n_{i} - 1) \le n_{d}$.
  Therefore, \corname~\ref{cor:pw_equivalence} and \theoremname~\ref{thm:grid_large_max} imply that
  $\vbw\left(\dTorus{d}\right) = \vbw\left(P_{2}^{d} \cp \dGrid{d}\right) = 2^{d} \prod_{i=1}^{d-1} n_{i}$.
 \end{proof}


 \section{Three-dimensional grids}
 \label{sec:3gri}
 In this section, we concentrate to the three-dimensional case.
 Thus, we define $\Bndry := \Bndry_{\dGrid{3}}$ and $\nBndry := \nBndry_{\dGrid{3}}$.
 In 1974, FitzGerald~\cite{Fit74} determined the bandwidth of cubic grids.
 \begin{thm}
  [\cite{Fit74}]
  $\bw(P_{n}^{3}) = \floor{(3n^{2} + 2n)/4}$.
 \end{thm}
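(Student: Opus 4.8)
The plan is to bypass orderings and compute the bandwidth through the vertex boundary width. Since $P_{n}^{3}$ is a grid, the corollary to the simplicial-order isoperimetric theorem gives $\bw(P_{n}^{3}) = \pw(P_{n}^{3}) = \vbw(P_{n}^{3})$; and because the simplicial order $\ls$ is isoperimetric, $\nBndry(k) = |\Bndry(I_{k})|$ for every $k$, where $I_{k}$ is the set of the first $k$ vertices of $P_{n}^{3}$ in $\ls$. Thus the entire problem reduces to evaluating $\vbw(P_{n}^{3}) = \max_{k} |\Bndry(I_{k})|$ and checking that it equals $\floor{(3n^{2}+2n)/4}$, which collapses the usual separate upper and lower bounds into a single maximization.

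I would next organize this maximization by weight layers. Writing $W_{w}$ for the set of vertices of weight $w$ and $a_{w} = |W_{w}|$, the layers are symmetric, $a_{w} = a_{3(n-1)-w}$, unimodal, and given by the usual restricted-composition count, which is piecewise quadratic in $w$. Every prefix $I_{k}$ consists of all vertices of weight below some $w$ together with the first $j$ vertices of $W_{w}$. Because grid adjacency changes the weight by exactly $\pm1$ and every weight-$w$ vertex has a weight-$(w-1)$ neighbour, all $a_{w}-j$ remaining vertices of $W_{w}$ lie in $\Bndry(I_{k})$, while the only other boundary vertices sit in $W_{w+1}$. Moreover the isoperimetric (nestedness) property forces $\Bndry(I_{k})$ to be a contiguous block of the order, so its part inside $W_{w+1}$ is an initial segment, of some size $s(j)$. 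Hence
\[
  |\Bndry(I_{k})| = (a_{w} - j) + s(j),
\]
and everything hinges on the single up-shadow function $s$.

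The core of the argument, and the step I expect to fight hardest, is a marginal analysis of $s$ along the anti-lexicographic order within a layer. Adding the $j$th vertex of $W_{w}$ deletes one leftover vertex and contributes $c_{j} := s(j)-s(j-1)$ fresh weight-$(w+1)$ neighbours, so $|\Bndry(I_{k})|$ rises while $c_{j} \ge 2$, is level while $c_{j}=1$, and falls once $c_{j}=0$. A vertex has at most three up-neighbours, and the order processes vertices by decreasing first coordinate; I would prove that $c_{j}$ is then non-increasing, so the peak within layer $w$ equals $a_{w} + \sum_{i \,:\, c_{i} \ge 2}(c_{i}-1)$, attained as soon as the contributions fall below $2$. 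By the symmetry and unimodality of the $a_{w}$, the global maximum occurs for a weight $w$ near the centre, where the fill shadows into the largest layer. The main obstacle is therefore twofold: establishing monotonicity of $c_{j}$ and locating the crossover index $j^{*}$ exactly, and then evaluating the excess $\sum_{i}(c_{i}-1)$ together with the central layer size $a_{w}$.

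Finally I would push through the restricted-composition arithmetic at the centre. Both the layer size $a_{w}$ and the excess are quadratics in $n$ on the relevant range, and their sum should reduce to $(3n^{2}+2n)/4$ up to the parity of $n$ (equivalently, of $3(n-1)$); separating the case of a unique central layer from that of two equal central layers, the floor in $\floor{(3n^{2}+2n)/4}$ is precisely what unifies the two parity computations into a single formula.
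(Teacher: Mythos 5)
Your route is the same one the paper takes: the paper does not reprove FitzGerald's theorem directly but obtains it as the cubic case of \theoremname~\ref{thm:3d} (note $n^{2}-\floor{(n-1)^{2}/4}=\floor{(3n^{2}+2n)/4}$), and its proof is precisely your reduction to $\vbw$ via the isoperimetric simplicial order, followed by a layer-by-layer analysis of $|\Bndry(I_{k})|$ and a quadratic optimization over the central weight classes. So the strategy is sound and matches the paper's.

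There is, however, one concrete false step. The increments $c_{j}$ are \emph{not} non-increasing along the anti-lexicographic order within a layer. Counterexample in $P_{3}^{3}$, weight class $w=3$, processed in the order $(2,1,0),(2,0,1),(1,2,0),(1,1,1),(1,0,2),(0,2,1),(0,1,2)$: the numbers of fresh weight-$4$ boundary vertices are $2,1,1,1,0,1,0$. The vertex $(1,0,2)$ contributes nothing, since both of its up-neighbours $(2,0,2)$ and $(1,1,2)$ were already shadowed by $(2,0,1)$ and $(1,1,1)$, yet the later vertex $(0,2,1)$ still contributes the fresh vertex $(0,2,2)$. So the step ``prove that $c_{j}$ is non-increasing'' cannot be carried out as stated. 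What is true, and what the paper proves in \lemname~\ref{lem:local_non_dec_inc}, is the weaker threshold statement: $c_{j}\ge 1$ for every $v\les(\wei(v)-n_{2}+2,\,n_{2}-2,\,0)$ and $c_{j}\le 1$ for every later $v$ in the class. That weaker statement already yields unimodality of $|\Bndry(I_{k})|$ within the layer, pins the peak at the threshold vertex, and still validates your peak formula $a_{w}+\sum_{c_{i}\ge 2}(c_{i}-1)$; you should aim for it instead. Two further points need more than the appeal to ``symmetry and unimodality of the $a_{w}$'': first, confining the global maximum to the central weight range is exactly what \lemname~\ref{lem:non_dec_inc} does (for the cube, $\nBndry$ is non-decreasing while $\wei(v)<n-1$ and non-increasing once $\wei(v)\ge 2n-2$); second, the closing arithmetic is not automatic --- the paper evaluates $|\Bndry(S_{r})|$ exactly as a quadratic in $r$ and minimizes $(2r-3n+4)^{2}$ over the admissible integers $r$, which is where the parity split and the floor actually come from.
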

 We generalize the above result to the noncubic cases.
 More precisely, we prove the following theorem in this section.
 \begin{thm}
  \label{thm:3d}
  For $n_{1} \le n_{2} \le n_{3}$,
  \[
  \bw\left(\dGrid{3}\right) = \pw\left(\dGrid{3}\right) =
   \begin{cases}
    n_{1} n_{2} & \IF n_{1} + n_{2} - 2 \le n_{3}, \\
    \displaystyle
    n_{1} n_{2} - \floor{\left(\frac{n_{1} + n_{2} -n_{3} - 1}{2}\right)^{2}} & \OW.
   \end{cases}
  \]
 \end{thm}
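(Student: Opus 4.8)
The plan is to reduce everything to the vertex boundary width and then analyse the boundaries of initial segments of the simplicial order. By the corollaries of this section, $\bw(\dGrid{3}) = \pw(\dGrid{3}) = \vbw(\dGrid{3})$, so it suffices to compute $\vbw(\dGrid{3}) = \max_{k} \nBndry(k)$. Since $\les$ is isoperimetric, for each $k$ the minimum is attained by the initial segment $I_k$, so I only need to understand $|\Bndry(I_k)|$. Writing a vertex as $(x,y,z)$, and writing $k$ as ``all vertices of weight at most $w$, together with the first $m$ vertices of weight $w+1$'', I let $S$ denote that partial weight level and let $W$ denote the full weight level $w+1$. A short case analysis shows that $\Bndry(I_k)$ is contained in the two consecutive weight levels $w+1$ and $w+2$, and that $|\Bndry(I_k)| = (|W| - m) + |\Bndry^+(S)|$, where $\Bndry^+(S)$ is the set of weight-$(w+2)$ neighbours of $S$.

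The key structural step is to project onto the first two coordinates. Since $z = \wei(v) - x - y$ is determined by $(x,y)$ inside a weight level, each level is a set of cells $(x,y)$ in the rectangle $R = \{0,\dots,n_1-1\}\times\{0,\dots,n_2-1\}$, and the three upper neighbours of $(x,y,z)$ project to $(x+1,y)$, $(x,y+1)$, and $(x,y)$, the last lying one level up. I would first prove that the projection of $\Bndry(I_k)$ to $R$ is \emph{injective}: if a weight-$(w+1)$ cell is unfilled (hence a boundary vertex $v$), then the cell directly above it cannot lie in $\Bndry^+(S)$, since any lower neighbour witnessing that would precede $v$ in $\les$ and would therefore already lie in the initial segment $S$, contradicting that $v$ is unfilled. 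Consequently $|\Bndry(I_k)|$ equals the number of cells of $R$ that are covered, and every covered cell satisfies $w+2-n_3 \le x+y \le w+2$; thus each boundary is confined to a band of $n_3+1$ consecutive diagonals of $R$.

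For the upper bound I would maximise, over the position of the band, the number of cells of $R$ whose $x+y$ lies in an interval of $n_3+1$ consecutive values. When $n_1+n_2-2 \le n_3$ the band can contain every diagonal of $R$, giving the bound $n_1 n_2$ (which also follows directly from $\pw(G\cp P_n)\le |V(G)|$ applied with $G = P_{n_1}\cp P_{n_2}$). Otherwise the band omits $M = n_1+n_2-n_3-2$ diagonals, say $p$ at the bottom and $q$ at the top with $p+q=M$; since $M \le n_1-2 < n_1 \le n_2$, these omitted diagonals lie in the triangular corners of $R$ and contain $\binom{p+1}{2}+\binom{q+1}{2}$ cells. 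Minimising this over $p+q=M$ gives exactly $\floor{((n_1+n_2-n_3-1)/2)^2}$, so the best band covers $n_1 n_2 - \floor{((n_1+n_2-n_3-1)/2)^2}$ cells, matching the claimed value and establishing $\vbw(\dGrid{3}) \le$ the stated bound.

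It remains to prove the matching lower bound, which I expect to be the main obstacle. Here I would fix $w$ so that the coverable band is centred optimally, and then exhibit a value of $m$ for which the initial segment $S$ covers the entire band: the filled high-$x$ part of level $w+1$ must push its upper shadow onto the top diagonal of the band, while the bottom diagonal must survive as unfilled cells of level $w+1$. The delicate point is that $S$ must be an initial segment of $\les$, so one cannot freely prescribe which cells to fill. I would argue, by tracking how the covered region of $R$ evolves as $m$ increases one vertex at a time, that there is an $m$ at which the top diagonal has become fully shadowed while no bottom-diagonal vertex has yet been filled, so the whole band is realised. Combining this construction with the upper bound, and separating the two cases according to whether $n_1+n_2-2 \le n_3$, yields the stated formula.
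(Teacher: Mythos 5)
Your route is correct in outline and genuinely different from the paper's in its key technical device. Both arguments reduce to $\vbw$ via isoperimetry of the simplicial order and parametrize initial segments as all vertices of weight at most $w$ plus the first $m$ vertices of weight $w+1$. But where the paper proves two monotonicity lemmas for $\nBndry$ (Lemma~\ref{lem:non_dec_inc} across weight classes, Lemma~\ref{lem:local_non_dec_inc} within one) in order to pin the maximum down to the one-parameter family $S_{r}=\{v \les (r-n_{2}+2,n_{2}-2,0)\}$, and then evaluates $|\Bndry(S_{r})|$ column by column as an explicit quadratic in $r$ to be minimized, you project the boundary injectively onto the $n_{1}\times n_{2}$ rectangle of $(x,y)$-cells and observe that it always lies in a band of $n_{3}+1$ consecutive diagonals; the optimization then becomes the clean ``two triangular corners'' computation $\min_{p+q=M}\left(\binom{p+1}{2}+\binom{q+1}{2}\right)=\floor{(M+1)^{2}/4}$ with $M=n_{1}+n_{2}-n_{3}-2$, which is exactly $\floor{((n_{1}+n_{2}-n_{3}-1)/2)^{2}}$. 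This yields a more conceptual upper bound that bypasses the monotonicity analysis entirely. Your injectivity argument is right, but the justification is stated backwards: the other lower neighbours $v-\unit{k}+\unit{3}$ ($k\in\{1,2\}$) of the cell above an unfilled $v$ come \emph{after} $v$ in $\les$, so if one of them lay in the initial segment then $v$ would too; this is the same order comparison the paper uses in its Lemmas.

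What still has to be written out is the lower bound, and you are right that this is where the work is. Your sketch does go through: filling level $w+1$ in simplicial order processes columns in decreasing $x$, each column from its top diagonal down to its bottom one, so stopping immediately after the vertex $(w+3-n_{2},n_{2}-2,0)$ --- exactly the paper's threshold $(r-n_{2}+2,n_{2}-2,0)$ with $r=w+1$ --- leaves every cell of the bottom diagonal $w+2-n_{3}$ unfilled while every cell $(x,w+2-x)$ of the top diagonal already has a filled lower neighbour at the top of column $x$ or $x-1$; hence the whole band is realized and the two approaches meet at the same extremal set. Two details to add: the band may protrude past the diagonals $0,\dots,n_{1}+n_{2}-2$ of $R$, which only shrinks it, so the optimum is attained with the band fully inside; and in the case $n_{1}+n_{2}-2\le n_{3}$ your band argument supplies only the upper bound $n_{1}n_{2}$, so the matching construction (the paper's Section~\ref{sec:large}, or the same band-filling argument with $w=n_{1}+n_{2}-4$) must be invoked separately. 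These are completable details rather than flaws.
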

 \theoremname~\ref{thm:grid_large_max} implies the first case in the above theorem.
 Thus, in the rest of this section, we can assume that $n_{1} + n_{2} -2> n_{3}$.
 Indeed, our actual assumption can be slightly weak. We assume $n_{1} + n_{2} - 2 \ge n_{3}$ instead.
 Thus the case of $n_{1} + n_{2} - 2 = n_{3}$ is proved again.
 Note that $\floor{\left({n_{1} + n_{2} -n_{3} - 1}\right)^{2}/4} = 0$ if $n_{1} + n_{2} - 2 = n_{3}$.
 
 Let $I_{s}$ be the set of the first $s$ vertices of $\dGrid{3}$ in $\ls$.
 The following simple fact is useful to determine the peak of the function $\nBndry$.
 \begin{fact}
  $\nBndry(s) = \nBndry(s-1) + |\Bndry(I_{s}) \setminus \Bndry(I_{s-1})| - 1$.
 \end{fact}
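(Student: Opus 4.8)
The plan is to exploit that the simplicial order is isoperimetric for $\dGrid{3}$ (the theorem quoted above), so that $\nBndry(k) = |\Bndry(I_{k})|$ for every $k$. Rewriting the claimed identity in terms of boundary cardinalities, it suffices to prove
\[
|\Bndry(I_{s})| = |\Bndry(I_{s-1})| + |\Bndry(I_{s}) \setminus \Bndry(I_{s-1})| - 1 .
\]
I would obtain this from the elementary counting identity $|A| = |B| + |A \setminus B| - |B \setminus A|$, valid for any finite sets $A, B$, applied with $A = \Bndry(I_{s})$ and $B = \Bndry(I_{s-1})$. Under this substitution the whole statement reduces to the single assertion that $|\Bndry(I_{s-1}) \setminus \Bndry(I_{s})| = 1$; that is, adding the $s$th vertex to $I_{s-1}$ removes exactly one vertex from the boundary.

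To pin down $\Bndry(I_{s-1}) \setminus \Bndry(I_{s})$, let $v$ be the $s$th vertex in $\ls$, so that $I_{s} = I_{s-1} \cup \{v\}$. First I would show $v \in \Bndry(I_{s-1})$: assuming $s \ge 2$, $v$ is not the origin, so some coordinate $v_{k}$ is positive and $v - \unit{k}$ is a vertex of weight $\wei(v)-1 < \wei(v)$. By definition of the simplicial order $v - \unit{k} \ls v$, hence $v - \unit{k} \in I_{s-1}$; since $v - \unit{k}$ is adjacent to $v$ while $v \notin I_{s-1}$, indeed $v \in \Bndry(I_{s-1})$. Because $v \in I_{s}$ we have $v \notin \Bndry(I_{s})$, so $v \in \Bndry(I_{s-1}) \setminus \Bndry(I_{s})$.

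For the reverse inclusion, suppose $w \in \Bndry(I_{s-1})$ with $w \ne v$. Then $w \notin I_{s-1}$ together with $w \ne v$ gives $w \notin I_{s}$, while $w$ still has a neighbor in $I_{s-1} \subseteq I_{s}$; hence $w \in \Bndry(I_{s})$, so $w \notin \Bndry(I_{s-1}) \setminus \Bndry(I_{s})$. Therefore $\Bndry(I_{s-1}) \setminus \Bndry(I_{s}) = \{v\}$ has cardinality $1$, which closes the argument.

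The only genuinely nontrivial step is showing $v \in \Bndry(I_{s-1})$, and its sole ingredient is the weight-monotonicity of the simplicial order (every vertex other than the origin has a strictly lighter, hence earlier, neighbor); the remainder is routine set algebra. The one caveat is the boundary case $s = 1$, where $v$ is the origin and has no earlier neighbor, so the identity as written applies for $s \ge 2$ — precisely the range needed when tracking $\nBndry$ away from its trivial initial value.
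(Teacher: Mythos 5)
Your proposal is correct and follows essentially the same route as the paper: both replace $\nBndry(k)$ by $|\Bndry(I_k)|$ via the isoperimetric property of the simplicial order, apply the identity $|A| = |B| + |A\setminus B| - |B\setminus A|$, and reduce everything to showing $\Bndry(I_{s-1})\setminus\Bndry(I_s) = \{v\}$. The paper asserts that last step in one line ``from the definition of $\ls$,'' whereas you spell out why $v$ lies in $\Bndry(I_{s-1})$ and why no other boundary vertex is lost; your added caveat about $s\ge 2$ is a fair, minor point of rigor but not a difference in method.
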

 \begin{proof}
  Clearly, $\nBndry(s) = \nBndry(s-1) + |\Bndry(I_{s}) \setminus \Bndry(I_{s-1})| - |\Bndry(I_{s-1}) \setminus \Bndry(I_{s})|$.
  From the definition of $\ls$, $|\Bndry(I_{s-1}) \setminus \Bndry(I_{s})| = |\{v\}| = 1$, where $v$ is the $s$th vertex in $\ls$.
 \end{proof}

 From the above fact, we can show that the function $\nBndry$ is non-decreasing for small weight classes,
 and non-increasing for large weight classes.
 Note that if $I_{s} = I_{s-1} \cup \{v\}$ then $\Bndry(I_{s}) \setminus \Bndry(I_{s-1}) \subseteq \{v + \unit{i} \mid 1 \le i \le 3\}$,
 that is, new boundary vertices are adjacent to the new vertex.
 \begin{lem}
  \label{lem:non_dec_inc}
  Let $v$ be the $s$th vertex. Then,
  \begin{enumerate}
   \item $\nBndry(s) \ge \nBndry(s-1)$ if $\wei(v) < n_{3}-1$, and
   \item $\nBndry(s) \le \nBndry(s-1)$ if $\wei(v) \ge n_{1} + n_{2} - 2$.
  \end{enumerate}
 \end{lem}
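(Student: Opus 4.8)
The plan is to treat both inequalities through the \factname\ above, which turns each into a counting statement. By that fact, $\nBndry(s) - \nBndry(s-1) = |\Bndry(I_{s}) \setminus \Bndry(I_{s-1})| - 1$, so Part~1 amounts to showing $|\Bndry(I_{s}) \setminus \Bndry(I_{s-1})| \ge 1$ and Part~2 to $|\Bndry(I_{s}) \setminus \Bndry(I_{s-1})| \le 1$. As noted just before the lemma, every vertex of $\Bndry(I_{s}) \setminus \Bndry(I_{s-1})$ is a neighbor of $v$ of larger weight, so it lies among the (valid) candidates $v + \unit{1}, v + \unit{2}, v + \unit{3}$; a neighbor of $v$ of smaller weight already belongs to $I_{s-1}$.

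The key step is to decide exactly which candidate $v + \unit{i}$ is new. Such a vertex has weight $\wei(v)+1$, hence lies outside $I_{s}$ and is adjacent to $v \in I_{s}$, so it always lies in $\Bndry(I_{s})$; it is new iff it has no neighbor in $I_{s-1}$. Its only neighbors of weight at most $\wei(v)$ are $v$ itself (not in $I_{s-1}$) and the vertices $v + \unit{i} - \unit{j}$ with $j \ne i$, each of the same weight as $v$. Thus $v + \unit{i}$ is new precisely when every valid $v + \unit{i} - \unit{j}$ satisfies $v + \unit{i} - \unit{j} \gs v$. Comparing within the weight class (where $\ls$ is anti-lexicographic), a short check gives the rule: $v + \unit{3}$ is new iff it is valid; $v + \unit{2}$ is new iff it is valid and $v_{3} = 0$; and $v + \unit{1}$ is new iff it is valid and $v_{2} = v_{3} = 0$. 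Getting these comparisons right is the main obstacle, since the entire argument rests on which same-weight neighbors fall before or after $v$ in $\ls$.

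For Part~1, the hypothesis $\wei(v) < n_{3} - 1$ gives $v_{3} \le \wei(v) \le n_{3} - 2$, so $v + \unit{3}$ is a valid vertex, and by the rule it is new; hence $|\Bndry(I_{s}) \setminus \Bndry(I_{s-1})| \ge 1$, as wanted.

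For Part~2, under $\wei(v) \ge n_{1} + n_{2} - 2$ I would first eliminate $v + \unit{1}$: newness would force $v_{2} = v_{3} = 0$, whence $\wei(v) = v_{1} \le n_{1} - 2 < n_{1} + n_{2} - 2$, a contradiction. It then remains to show $v + \unit{2}$ and $v + \unit{3}$ are not both new. If $v_{3} \ge 1$, then $v + \unit{2}$ is not new by the rule, so at most one candidate survives. If $v_{3} = 0$, then $\wei(v) = v_{1} + v_{2} \ge n_{1} + n_{2} - 2$ together with $v_{1} \le n_{1} - 1$ and $v_{2} \le n_{2} - 1$ forces $v_{1} = n_{1} - 1$ and $v_{2} = n_{2} - 1$, so $v + \unit{2}$ is invalid and again at most one candidate is new. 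Either way $|\Bndry(I_{s}) \setminus \Bndry(I_{s-1})| \le 1$, completing the plan.
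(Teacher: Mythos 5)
Your proof is correct and takes essentially the same route as the paper: both reduce to counting $|\Bndry(I_{s}) \setminus \Bndry(I_{s-1})|$ via the preceding Fact, show that $v + \unit{3}$ is always a new boundary vertex when $\wei(v) < n_{3}-1$ (part 1), and rule out $v + \unit{1}$ and $v + \unit{2}$ as new boundary vertices under the hypothesis $\wei(v) \ge n_{1}+n_{2}-2$ by forcing the trailing coordinates of $v$ to vanish and deriving a weight contradiction (part 2). Your explicit \emph{newness rule} for each candidate $v + \unit{i}$ is just a slightly more systematic packaging of the paper's same-weight-class (anti-lexicographic) comparisons, and all three cases of the rule check out.
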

 \begin{proof}
  (1)
  We show that $v + \unit{3} \in \Bndry(I_{s}) \setminus \Bndry(I_{s-1})$.
  Clearly, $v_{3} < n_{3} - 1$, and so, $v + \unit{3} \in \Bndry(I_{s})$.
  Suppose  $v + \unit{3} \in \Bndry(I_{s-1})$.
  Then, $v - \unit{k} + \unit{3} \in I_{s-1}$ for some $k \in \{1,2\}$.
  This contradicts $v \ls v - \unit{k} + \unit{3}$.

  (2)
  We show that $\Bndry(I_{s}) \setminus \Bndry(I_{s-1}) \subseteq \{v + \unit{3}\}$, which implies $|\Bndry(I_{s}) \setminus \Bndry(I_{s-1})| \le 1$.
  Suppose $v + \unit{k} \in \Bndry(I_{s}) \setminus \Bndry(I_{s-1})$ for some $k \in \{1,2\}$.
  Then, $v_{k} < n_{k}-1$ and the vertices $\{v + \unit{k} - \unit{i} \mid k < i \le 3\}$ are not in $I_{s-1}$.
  This implies that $v_{i} = 0$ for $k < i \le 3$.
  So, we have that $\wei(v) = \sum_{i = 1}^{k} v_{i} < \sum_{i = 1}^{k} (n_{i} - 1)$,
  which implies that $\sum_{i=1}^{2} (n_{i} - 1) < \sum_{i = 1}^{k} (n_{i} - 1)$, a contradiction.
 \end{proof}

 For each weight class, we have the following similar property of $\nBndry$.
 \begin{lem}
  \label{lem:local_non_dec_inc}
  Let $v$ be the $s$th vertex and $n_{3} -1 \le \wei(v) < n_{1} + n_{2} - 2$. Then,
  \begin{enumerate}
   \item $\nBndry(s) \ge \nBndry(s-1)$ if $v \les (\wei(v) - n_{2} + 2, n_{2} - 2,0)$,
   \item $\nBndry(s) \le \nBndry(s-1)$ otherwise.
  \end{enumerate}
 \end{lem}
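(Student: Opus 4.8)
The plan is to reduce both assertions, through the preceding Fact, to a single quantity: the number $c(v) := |\Bndry(I_{s}) \setminus \Bndry(I_{s-1})|$ of boundary vertices created when $v$ is inserted. Since the Fact gives $\nBndry(s) - \nBndry(s-1) = c(v) - 1$, part~(1) is exactly the inequality $c(v) \ge 1$ and part~(2) is exactly $c(v) \le 1$. Writing $v^{*} = (\wei(v) - n_{2} + 2,\, n_{2} - 2,\, 0)$ for the threshold vertex (which is a genuine vertex of the grid precisely because $n_{3} - 1 \le \wei(v) < n_{1} + n_{2} - 2$ forces $0 \le v^{*}_{1} \le n_{1} - 1$), the whole lemma thus comes down to computing $c(v)$ from the coordinates of $v$ and comparing $v$ with $v^{*}$ in the order $\ls$.

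The key step is the characterization of newly created boundary vertices. As already noted, any vertex of $\Bndry(I_{s}) \setminus \Bndry(I_{s-1})$ has the form $v + \unit{i}$ with $v_{i} \le n_{i} - 2$. Such a vertex is genuinely new exactly when none of its remaining lower neighbours $v + \unit{i} - \unit{j}$ with $j \ne i$ already lies in $I_{s-1}$. A direct reading of the definition of $\ls$ shows that, whenever $v + \unit{i} - \unit{j}$ is a valid vertex (which for $j \ne i$ requires $v_{j} \ge 1$), one has $v + \unit{i} - \unit{j} \ls v$ if and only if $i < j$. Hence $v + \unit{i}$ is new if and only if $v_{i} \le n_{i} - 2$ and $v_{j} = 0$ for all $j > i$. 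For $d = 3$ this expresses $c(v)$ as a sum of three indicators, namely $[\,v_{1} \le n_{1} - 2,\ v_{2} = v_{3} = 0\,]$, $[\,v_{2} \le n_{2} - 2,\ v_{3} = 0\,]$, and $[\,v_{3} \le n_{3} - 2\,]$.

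From this formula both inequalities follow by a short case analysis on $v_{3}$, using only $n_{2} \le n_{3}$ and $\wei(v) < n_{1} + n_{2} - 2$. For part~(2): if $v_{3} > 0$ then only the last indicator can contribute, so $c(v) \le 1$; if $v_{3} = 0$, then $v \gs v^{*}$ together with $v_{1} + v_{2} = \wei(v) = v^{*}_{1} + v^{*}_{2}$ forces $v_{2} > v^{*}_{2} = n_{2} - 2$, i.e. $v_{2} = n_{2} - 1$, which kills the first two indicators and again yields $c(v) \le 1$. For part~(1): the three indicators all vanish only when $v_{3} = n_{3} - 1$, so $c(v) \ge 1$ is equivalent to $v_{3} \le n_{3} - 2$; it then suffices to prove the contrapositive $v_{3} = n_{3} - 1 \Rightarrow v \gs v^{*}$. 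Assuming $v_{3} = n_{3} - 1$ gives $v_{1} + v_{2} = \wei(v) - n_{3} + 1$, and substituting either $v_{1} > v^{*}_{1}$ or ($v_{1} = v^{*}_{1}$ and $v_{2} > v^{*}_{2}$) into this identity produces a negative coordinate (this is where $n_{2} \le n_{3}$ enters); hence $v \ges v^{*}$, and since $v^{*}_{3} = 0 \ne n_{3} - 1$ forces $v \ne v^{*}$, we conclude $v \gs v^{*}$.

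I expect the main obstacle to be the characterization in the second paragraph: determining exactly which candidates $v + \unit{i}$ are new requires the anti-lexicographic comparison $v + \unit{i} - \unit{j} \ls v \iff i < j$ handled together with the validity condition $v_{j} \ge 1$ for the neighbour to exist. Once this is secured, the remaining work is the elementary coordinate arithmetic above; the only subtlety is the degenerate vertex $(\wei(v), 0, 0)$, for which the first indicator might fire, and one checks quickly that it always satisfies $v \les v^{*}$ and so never endangers part~(2).
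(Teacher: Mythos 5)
Your proof is correct and takes essentially the same route as the paper's: both reduce the lemma via the preceding Fact to deciding which candidates $v + \unit{i}$ are newly created boundary vertices, using the same key observation that $v + \unit{i} - \unit{j} \ls v$ exactly when $i < j$, so that $v + \unit{i}$ is new if and only if $v_{i} \le n_{i}-2$ and $v_{j} = 0$ for all $j > i$. Your explicit three-indicator formula for $c(v)$ and the contrapositive phrasing of part~(1) are only cosmetic repackagings of the paper's direct case analysis.
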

 \begin{proof}
  First we show that $(\wei(v) - n_{2} + 2, n_{2} - 2,0) \in \dGrid{3}$.
  It suffices to show that $0 \le \wei(v) - n_{2} + 2 \le n_{1} -1$.
  If $\wei(v) - n_{2} + 2 < 0$, then $n_{3} -1 \le \wei(v)  < n_{2} - 2$.
  This implies $n_{3} < n_{2} - 1$, a contradiction.
  If $n_{1}-1 < \wei(v) - n_{2} + 2$, then we have $n_{1} + n_{2} - 3 < \wei(v) \le n_{1} + n_{2} - 3$,
  which is a contradiction.
  
  (1)
  We show that $v + \unit{3} \in \Bndry(I_{s}) \setminus \Bndry(I_{s-1})$.
  Since $v \les (\wei(v) - n_{2} + 2, n_{2} - 2,0)$, $v_{1} \ge \wei(v) - n_{2} + 2$.
  Hence, $v_{2} + v_{3} \le n_{2}-2 < n_{3}-1$,
  and thus, $v + \unit{3} \in V(\dGrid{3})$.
  Suppose  $v + \unit{3} \in \Bndry(I_{s-1})$.
  Then, $v - \unit{k} + \unit{3} \in I_{s-1}$ for some $k \in \{1,2\}$.
  This contradicts $v \ls v - \unit{k} + \unit{3}$.

  (2)
  We show that $\Bndry(I_{s}) \setminus \Bndry(I_{s-1}) \subseteq \{v + \unit{3}\}$,
  which implies $|\Bndry(I_{s}) \setminus \Bndry(I_{s-1})| \le 1$.
  Suppose $v + \unit{k} \in \Bndry(I_{s}) \setminus \Bndry(I_{s-1})$ for some $k \in \{1,2\}$.
  Then, $v_{k} < n_{k}-1$ and the vertices $\{v + \unit{k} - \unit{i} \mid k < i \le 3\}$ are not in $I_{s-1}$.
  This implies that $v_{i} = 0$ for $k < i \le 3$.
  If $k=1$, then $v_{1} < n_{1} - 1$ and $v_{2} = v_{3} = 0$.
  Thus, we have that $v_{1} = \wei(v) \ge n_{3} - 1 \ge n_{1} - 1$, a contradiction.
  If $k=2$, then $v_{2} < n_{2} - 1$ and $v_{3} = 0$.
  Thus, $v_{1} = \wei(v) - v_{2} \ge \wei(v) - n_{2} + 2$.
  On the other hand, $v_{1} \le \wei(v)-n_{2}+2$ since $v \gs (\wei(v)-n_{2}+2,n_{2}-2,0)$.
  Therefore, we have $v = (\wei(v)-n_{2}+2, n_{2}-2, 0)$, a contradiction.
 \end{proof}

 \lemname{s}~\ref{lem:non_dec_inc} and \ref{lem:local_non_dec_inc} together imply the following corollary.
 \begin{cor}
  \label{cor:3d_peak}
  If $n_{3} \le n_{1} + n_{2} - 2$ then
  \[
  \vbw\left(\dGrid{3}\right)
  = \max_{r = n_{3} - 1}^{n_{1} + n_{2}  - 3} 
  \left|\Bndry\left(\left\{u \in V\left(\textstyle\dGrid{3}\right) \mid u \les (r-n_{2}+2, n_{2}-2,0)\right\}\right)\right|.
  \]
 \end{cor}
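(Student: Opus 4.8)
The plan is to reduce the statement to a search for the global maximum of $\nBndry$ along the simplicial order. Since that order is isoperimetric for $\dGrid{3}$, we have $\nBndry(s) = |\Bndry(I_{s})|$ for every $s$, and therefore $\vbw(\dGrid{3}) = \max_{s} \nBndry(s)$. It then suffices to show that this maximum is attained at one of the indices $s$ whose $s$th vertex equals $(r - n_{2} + 2, n_{2} - 2, 0)$ for some $r$ with $n_{3} - 1 \le r \le n_{1} + n_{2} - 3$. Indeed, at such an index the prefix $I_{s}$ is exactly $\{u \in V(\dGrid{3}) \mid u \les (r - n_{2} + 2, n_{2} - 2, 0)\}$, so $\nBndry(s)$ equals the corresponding term inside the claimed maximum.

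First I would localize the maximum to the middle weight classes, namely $n_{3} - 1 \le r \le n_{1} + n_{2} - 3$. By \lemname~\ref{lem:non_dec_inc}(1), $\nBndry$ is non-decreasing as long as the newly inserted vertex has weight below $n_{3} - 1$; by \lemname~\ref{lem:non_dec_inc}(2) it is non-increasing once the inserted vertex has weight at least $n_{1} + n_{2} - 2$. Hence every index in the low range (inserted weight $< n_{3} - 1$) is dominated by later indices, and every index in the high range (inserted weight $\ge n_{1} + n_{2} - 2$) is dominated by earlier ones, so neither range can beat the extremal middle indices, provided the two ranges are glued correctly to the middle.

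Next, within a single middle weight class $r$, \lemname~\ref{lem:local_non_dec_inc} shows that $\nBndry$ behaves unimodally as the class is processed: it is non-decreasing while the inserted vertex is $\les (r - n_{2} + 2, n_{2} - 2, 0)$ and non-increasing once it is $\gs (r - n_{2} + 2, n_{2} - 2, 0)$. Consequently the peak of class $r$ is attained precisely when the last inserted vertex is $(r - n_{2} + 2, n_{2} - 2, 0)$, its value is $|\Bndry(\{u \mid u \les (r - n_{2} + 2, n_{2} - 2, 0)\})|$, and every value of $\nBndry$ taken inside class $r$ is at most this peak. Taking the maximum of these peaks over $r$ is exactly the right-hand side of the corollary.

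The main obstacle I expect is the gluing: combining the per-class unimodality with the global monotonicity so that the low and high ranges are genuinely dominated by the middle peaks. Concretely, one must check that $\nBndry$ at the last vertex of weight $n_{3} - 2$ does not exceed the peak of the first middle class, and symmetrically that everything from the peak of class $n_{1} + n_{2} - 3$ onward stays below that peak. This amounts to verifying that the first vertex of weight $n_{3} - 1$ precedes $(n_{3} - n_{2} + 1, n_{2} - 2, 0)$ in $\ls$, and that the tail of class $n_{1} + n_{2} - 3$ follows $(n_{1} - 1, n_{2} - 2, 0)$; both rest on the standing hypothesis $n_{3} \le n_{1} + n_{2} - 2$, which is exactly what keeps the first coordinate $r - n_{2} + 2$ of the special vertex in the admissible range. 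Once these boundary comparisons are settled, \lemname{s}~\ref{lem:non_dec_inc} and \ref{lem:local_non_dec_inc} combine to pin the global maximum of $\nBndry$ to the list of class peaks, yielding the displayed formula.
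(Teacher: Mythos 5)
Your proposal is correct and follows essentially the same route as the paper, which simply asserts that Lemmas~\ref{lem:non_dec_inc} and \ref{lem:local_non_dec_inc} together imply the corollary; you spell out the intended combination (global monotonicity outside the middle weight classes, per-class unimodality peaking at $(r-n_{2}+2,n_{2}-2,0)$, and the gluing at class boundaries). The boundary checks you flag are exactly the routine verifications the paper leaves implicit, and they go through under the standing hypothesis $n_{3}\le n_{1}+n_{2}-2$.
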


 From the above corollary, we can show the main result.
 Since $\vbw(P_{2}^{d})$ is known~\cite{Har66,WWD09},
 we assume $n_{3} \ge 3$.
 The assumptions $n_{3} \ge 3$ and $n_{3} \le n_{1} + n_{2} - 2$ imply $n_{2} \ge 3$.
 \begin{lem}
  If $n_{3} \le n_{1} + n_{2} - 2$ then $\vbw(\dGrid{3}) = n_{1}n_{2} - \floor{(n_{1} + n_{2} - n_{3} - 1)^{2}/4}$.
 \end{lem}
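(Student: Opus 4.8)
The plan is to apply \corname~\ref{cor:3d_peak}, which under the standing hypothesis $n_{3} \le n_{1} + n_{2} - 2$ expresses $\vbw(\dGrid{3})$ as the maximum, over $r$ with $n_{3} - 1 \le r \le n_{1} + n_{2} - 3$, of $|\Bndry(S_{r})|$, where $S_{r} = \{u \in V(\dGrid{3}) \mid u \les v\}$ and $v = (a, b, 0)$ with $a = r - n_{2} + 2$, $b = n_{2} - 2$ (so $\wei(v) = r$ and, since $n_{2} \ge 3$, $b \ge 1$). Let $D(t)$ count the lattice points on the diagonal $y + z = t$ of the $n_{2} \times n_{3}$ face: $D(t) = t + 1$ for $0 \le t \le n_{2} - 1$, $D(t) = n_{2}$ for $n_{2} - 1 \le t \le n_{3} - 1$, $D(t) = n_{2} + n_{3} - 1 - t$ for $n_{3} - 1 \le t \le n_{2} + n_{3} - 2$, and $D(t) = 0$ otherwise. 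The core of the argument is the clean identity
\[
|\Bndry(S_{r})| = n_{2} + \sum_{t = r - n_{1} + 2}^{r} D(t),
\]
after which everything reduces to maximizing a sliding window sum of width $n_{1} - 1$ over the trapezoidal function $D$.

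To prove the identity I would split $\Bndry(S_{r})$ by weight. Every weight-$r$ vertex outside $S_{r}$ lies in the boundary, since decreasing any positive coordinate produces a weight-$(r-1)$ neighbour, which is in $S_{r}$. The anti-lexicographic rule for $\ls$ inside a weight class shows that $S_{r}$ meets weight class $r$ in exactly $\{u : \wei(u) = r,\ u_{1} \ge a + 1\} \cup \{v\}$, so the weight-$r$ contribution is $\sum_{t = b + 1}^{r} D(t) + b$. For weight $r + 1$ I would show that a vertex $w$ is adjacent to $S_{r}$ through a non-$v$ vertex exactly when $w_{1} \ge a + 1$ --- here $b \ge 1$ guarantees a usable neighbour $w - \unit{2}$ or $w - \unit{3}$ --- and that the only further boundary vertices are $v + \unit{2}$ and $v + \unit{3}$, each reached only through $v$; this gives $\sum_{t = r - n_{1} + 2}^{b} D(t) + 2$. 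Adding the two contributions and using $b + 2 = n_{2}$ merges the two $D$-sums into the single window $[r - n_{1} + 2,\ r]$. This boundary bookkeeping, with the two exceptional vertices $v + \unit{2}, v + \unit{3}$ and the degenerate endpoint $a = n_{1} - 1$ (where the weight-$(r+1)$ window is empty but the formula still holds), is the step I expect to be the most delicate.

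It remains to maximize $\sum_{t = r - n_{1} + 2}^{r} D(t)$. Since $D$ is unimodal with a flat top of value $n_{2}$ over the $n_{3} - n_{2} + 1$ indices $t \in [n_{2} - 1, n_{3} - 1]$ and sides of unit slope, a window of width $n_{1} - 1$ covers the whole top and overhangs it by $p$ steps on the left and $q$ steps on the right, where $p + q = (n_{1} - 1) - (n_{3} - n_{2} + 1) = m - 1$ and $m = n_{1} + n_{2} - n_{3} - 1$; an overhang of depth $i$ costs exactly $i$ relative to the peak. Hence the window sum equals $n_{2}(n_{1} - 1) - (p(p+1) + q(q+1))/2$, so maximizing it amounts to minimizing $p^{2} + q^{2}$ subject to $p + q = m - 1$. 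The balanced split $p = \floor{(m-1)/2}$, $q = \ceil{(m-1)/2}$ is optimal, and a parity check shows the resulting loss is $\floor{m^{2}/4}$ for both parities of $m$. Finally I would verify that the centred window is admissible: the optimal index is near $(n_{1} + n_{2} + n_{3})/2 - 2$, and both bounds $n_{3} - 1 \le r \le n_{1} + n_{2} - 3$ reduce to the hypothesis $n_{3} \le n_{1} + n_{2} - 2$, while $p, q \le m - 1 \le n_{1} - 2 < n_{2} - 1$ shows the overhangs actually fit on the two sides. Combining, $\max_{r} |\Bndry(S_{r})| = n_{2} + n_{2}(n_{1} - 1) - \floor{m^{2}/4} = n_{1} n_{2} - \floor{(n_{1} + n_{2} - n_{3} - 1)^{2}/4}$, as required.
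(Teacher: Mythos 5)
Your proposal is correct and follows essentially the same route as the paper: both reduce the problem via \corname~\ref{cor:3d_peak}, split $\Bndry(S_{r})$ into its weight-$r$ and weight-$(r+1)$ parts, count each part slice-by-slice along the first coordinate, and then optimize over $r$ in the range $[n_{3}-1,\, n_{1}+n_{2}-3]$. The only real difference is presentational: the paper collapses the count into an explicit quadratic in $r$ and minimizes the perfect square $(2r-n_{1}-n_{2}-n_{3}+4)^{2}$, whereas you keep the count as a width-$(n_{1}-1)$ sliding-window sum over the trapezoidal profile $D$ and balance the two overhangs --- the same computation in different clothing.
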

 \begin{proof}
  Let $S_{r} = \{v \in V(\dGrid{3}) \mid v \les (r-n_{2}+2, n_{2}-2,0)\}$.
  From \corname~\ref{cor:3d_peak}, it is sufficient to show that
  $\max_{r = n_{3}-1}^{n_{1} + n_{2} - 3}|\Bndry(S_{r})| = n_{1}n_{2} - \floor{(n_{1} + n_{2} - n_{3} - 1)^{2}/4}$.
  Assume $n_{3}-1 \le r < n_{1} + n_{2} - 2$.
  First we show that
  \[
  |\Bndry(S_{r})| =
  n_{1} n_{2} - \frac{(n_{1}+n_{2}-n_{3} -1)^{2} - 1}{4}
  - \frac{(2r - n_{1} - n_{2} - n_{3} + 4)^{2}}{4}.
  \]
  Let $B_{i} = \{v \in \Bndry(S_{r}) \mid \wei(v) = i\}$.
  Then, from the definition of $\ls$,
  $\Bndry(S_{r}) = B_{r} \cup B_{r+1}$ and
  \begin{align*}
   B_{r} &= \{v \in V({\textstyle\dGrid{3}}) \mid (r-n_{2}+2, n_{2}-3, 1) \les v \les (0, r - n_{3} + 1, n_{3}-1)\},
   \\
   B_{r +1} &= \{v \in V({\textstyle\dGrid{3}}) \mid (n_{1} - 1, r - n_{1} + 2, 0) \les v \les (r-n_{2}+2, n_{2}-2,1) \}.
  \end{align*}
  It is easy to see that 
  the four vertices $(r-n_{2}+2, n_{2}-3, 1)$, $(0, r - n_{3} + 1, n_{3}-1)$, $(n_{1} - 1, r - n_{1} + 2, 0)$, and $(r-n_{2}+2, n_{2}-2,1)$
  are in $V(\dGrid{3})$.
  To see this, use the assumptions $n_{1} \le n_{2} \le n_{3}$ and $n_{3}-1 \le r < n_{1} + n_{2} - 2$.

  Let $B_{i}(j)$ denote the set $\{v \in B_{i} \mid v_{1} = j\}$.
  Then, $B_{r} = \bigcup_{j=0}^{r-n_{2}+2} B_{r}(j)$ and $B_{r+1} = \bigcup_{j=r-n_{2}+2}^{n_{1}-1} B_{r+1}(j)$.
  \begin{claim}
   $|B_{r}| = (n_{2} - 2) + \sum_{j = r-n_{3}+2}^{r-n_{2}+1} n_{2} + \sum_{j = 0}^{r-n_{3}+1} (n_{2} + n_{3} - r + j - 1)$.
  \end{claim}
  \begin{proof}
   It is easy to see that
   \begin{align*}
    |B_{r}(r-n_{2}+2)|
    &=
    |\{(r - n_{2} + 2, a, b) \in V(\textstyle{\dGrid{3}}) \mid a + b = n_{2} - 2, a \le n_{2}-3\}|
    \\&=
    |\{(n_{2}-3, 1), (n_{2}-4, 2), \dots, (0, n_{2}-2)\}|
    =
    n_{2}-2.
   \end{align*}
   Assume that $r - n_{3}+2 \le j \le r-n_{2}+1$.
   Since $r-(r-n_{2}+1)-(n_{2}-1) = 0$ and $r-(r - n_{3}+2)-0 < n_{3}-1$,
   we have $0 \le r - j - k \le n_{3}-1$ for $0 \le k \le n_{2}-1$.
   Thus,
   \begin{align*}
    |B_{r}(j)|
    &= 
    |\{(j, k, r - j - k) \mid 0 \le k \le n_{2}-1, 0 \le r - j - k \le n_{3}-1\}|
    \\&=
    |\{(k, r - j - k) \mid 0 \le k \le n_{2} - 1\}|
    =
    n_{2}.
   \end{align*}
   Assume that $0 \le j \le r - n_{3} + 1$.
   Since $r - (r - n_{3} + 1) - (n_{2}-1) \ge 0$
   and $r - j - (r - j - n_{3} + 1) =  n_{3} - 1$,
   we have $0 \le r - j - k \le n_{3}-1$ for $r - j - n_{3} + 1 \le k \le n_{2}-1$.
   Hence,
   \begin{align*}
    |B_{r}(j)|
    &= 
    |\{(j, k, r - j - k) \mid 0 \le k \le n_{2}-1, 0 \le r - j - k \le n_{3}-1\}|
    \\&= 
    |\{(k, r - j - k) \mid r - j - n_{3} + 1 \le k \le n_{2}-1\}|
    = 
    n_{2} + n_{3} - r + j - 1.
   \end{align*}
   Thus, the claim holds.
  \end{proof}
  \begin{claim}
   $|B_{r+1}| = 2 + \sum_{j = r-n_{2}+3}^{n_{1} - 1} (r-j+2)$.
  \end{claim}
  \begin{proof}
   Obviously, $B_{r+1}(r-n_{2}+2) = \{(r-n_{2}+2, n_{2}-1, 0), (r-n_{2}+2, n_{2}-2, 1)\}$.
   Assume that $r - n_{2} + 3 \le j \le n_{1} - 1$.
   Since $r - j - (r-j+1) + 1 = 0$ and $r - (r - n_{2} + 3) - 0 + 1 = n_{2} - 2$,
   we have $0 \le r - j - k + 1 \le n_{3} - 1$ for $0 \le k \le r-j+1$.
   Hence,
   \begin{align*}
    |B_{r+1}(j)|
    &= 
    |\{(j, k, r - j - k + 1) \mid 0 \le k \le n_{2}-1, 0 \le r - j - k \le n_{3}-1\}|
    \\&=
    |\{(k, r - j - k + 1) \mid 0 \le k \le r-j+1\}|
    =
    r-j+2.
   \end{align*}
   Thus, the claim holds.
  \end{proof}
  
  The above two claims imply that
  \begin{align*}
   |\Bndry(S_{r})| &= |B_{r}| + |B_{r+1}|
   \\&=
    n_{2}  + \sum_{j = r-n_{3}+2}^{r-n_{2}+1} n_{2} + \sum_{j = 0}^{r-n_{3}+1} (n_{2} + n_{3} - r + j - 1)
    + \sum_{j = r-n_{2}+3}^{n_{1} - 1} (r-j+2)
   \\&=
   -r^{2} + (n_{1} + n_{2} + n_{3} -4)r - \frac{n_{1}^{2} + n_{2}^{2} + n_{3}^{2} - 5n_{1} - 5 n_{2} - 3n_{3} + 8}{2}
   \\&=
   \left(n_{1} n_{2} - \frac{(n_{1} + n_{2} - n_{3} -1)^{2} - 1}{4}\right)
   - \frac{(2r - n_{1} - n_{2} - n_{3} + 4)^{2}}{4}.
  \end{align*}
  Now, let $g(r) = (2r - n_{1} - n_{2} - n_{3} + 4)^{2}$.
  From the above observation,
  \begin{align*}
   \vbw\left(\dGrid{3}\right)
   &=
   \left(n_{1} n_{2} - \frac{(n_{1} + n_{2} - n_{3} -1)^{2} - 1}{4}\right)
   - \min_{r = n_{3} - 1}^{n_{1} + n_{2} -2} g(r)/4.
  \end{align*}
  Thus, minimizing $g(r)$, we can determine $\vbw\left(\dGrid{3}\right)$.
  \begin{claim}
   For $n_{3}-1 \le r \le n_{1} + n_{2} - 2$,
   $g(r)$ is minimized at $r = \floor{(n_{1} + n_{2} + n_{3} -4)/2}$.
  \end{claim}
  \begin{proof}
   Since $n_{3} \le n_{1} + n_{2} -2$, we have $n_{3}-1 \le \floor{(n_{1} + n_{2} + n_{3} -4)/2} \le n_{1} + n_{2} - 2$.
   Clearly,
   \[
   g(\floor {(n_{1} + n_{2} + n_{3} -4)/2}) = 
   \begin{cases}
    0 & \IF n_{1} + n_{2} + n_{3} \textrm{ is even}, \\
    1 & \OW.
   \end{cases}
   \]
   Since $r$ , $n_{1}$, $n_{2}$, and $n_{3}$ are integers and $g(r) = (2r - n_{1} - n_{2} - n_{3} + 4)^{2}$,
   $g(r)$ is a nonnegative integer.
   It is easy to see that if $g(x) = 0$ for some integer $x$, then $n_{1} + n_{2} + n_{3}$ is even.
   Since $g(r)$ is the square of some integer, the claim holds.
  \end{proof}
  Therefore, if $n_{3} \le n_{1} + n_{2} - 2$, then
  \begin{align*}
   \vbw\left(\dGrid{3}\right)
   &=
   n_{1}n_{2} - \frac{(n_{1}+n_{2}-n_{3} -1)^{2} - 1}{4}
   - 
   \begin{cases}
    0 & \IF n_{1} + n_{2} + n_{3} \textrm{ is even} \\
    1/4 & \OW
   \end{cases}
   \\&= 
   n_{1} n_{2} - \floor{\frac{(n_{1} + n_{2} - n_{3} -1)^{2}}{4}}.
  \end{align*}
  This completes the proof.
  (Note that $n_{1} + n_{2} + n_{3} \equiv n_{1} + n_{2} - n_{3} \pmod{2}.$)
 \end{proof}

 \section{Concluding remarks}
 \label{sec:conclusion}
 We have determined the vertex boundary width of three-dimensional grids.
 Since the vertex boundary width is equal to the bandwidth and the pathwidth for grids,
 the result properly extends some known results~\cite{Fit74,Chv75,EW08}.
 Since our result determines the bandwidth and the pathwidth of any grid whose dimension is three,
 it would be natural to study these parameters of four or more-dimensional grids.
 Here, we give a conjecture which was verified by computational experiments for $n \le 100$.
 \begin{conj}
  $\vbw(P_{n}^{4}) = \floor {(8n^{3} + 3n^{2} + 4n)/12}$.
 \end{conj}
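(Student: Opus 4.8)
The plan is to mirror the three-dimensional argument of this section, now carried out for the $4$-cube. Since the simplicial order on $V(P_{n}^{4})$ is isoperimetric by the general-dimension theorem of Bollob\'{a}s--Leader and Moghadam quoted in Section~\ref{sec:pre}, every initial segment $I_{s}$ of $\ls$ attains $\nBndry(s)$, so that $\vbw(P_{n}^{4}) = \pw(P_{n}^{4}) = \bw(P_{n}^{4}) = \max_{s} |\Bndry(I_{s})|$. The governing identity $\nBndry(s) = \nBndry(s-1) + |\Bndry(I_{s}) \setminus \Bndry(I_{s-1})| - 1$ (the Fact preceding Lemma~\ref{lem:non_dec_inc}) holds verbatim in four dimensions, since its proof used only that exactly one vertex leaves the initial segment at each step. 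Writing $v$ for the $s$th vertex, the new boundary vertices lie among the forward neighbours $\{v + \unit{i} \mid 1 \le i \le 4\}$, so $\nBndry$ rises exactly while $v$ has at least two \emph{fresh} forward neighbours and falls once it has at most one; the task thus reduces to locating the peak of this unimodal profile and evaluating the boundary there.

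First I would prove the four-dimensional analogues of Lemmas~\ref{lem:non_dec_inc} and~\ref{lem:local_non_dec_inc}. The coarse monotonicity follows as before: for $\wei(v) < n-1$ the neighbour $v + \unit{4}$ is always fresh---its only possible predecessors $v - \unit{k} + \unit{4}$ contradict $v \ls v - \unit{k} + \unit{4}$---so $\nBndry$ is non-decreasing there, while for $\wei(v) \ge 3(n-1)$ the case analysis of Lemma~\ref{lem:non_dec_inc}(2) shows that a fresh $v + \unit{k}$ with $k \le 3$ would force $v_{i} = 0$ for $k < i \le 4$ and hence $\wei(v) < 3(n-1)$, a contradiction, so $|\Bndry(I_{s}) \setminus \Bndry(I_{s-1})| \le 1$ and $\nBndry$ is non-increasing. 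This confines the peak to the central band $n-1 \le \wei(v) \le 3(n-1)-1$. Inside each such weight class I would then pin down, as in Lemma~\ref{lem:local_non_dec_inc}, the exact anti-lexicographic threshold vertex at which the count of fresh forward neighbours drops from at least two to at most one---the four-dimensional analogue of $(r-n_{2}+2, n_{2}-2, 0)$---the cubic symmetry making the central weight $r = 2(n-1)$ the natural candidate for the global maximum.

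It then remains to evaluate $|\Bndry(I)|$ for the critical initial segment $I$. As in the two Claims above, the boundary splits as $\Bndry(I) = B_{w} \cup B_{w+1}$ into its parts of weight $w$ and $w+1$, each an $\ls$-interval inside its layer. I would stratify each $B_{i}$ first by $v_{1}$ and then by $v_{2}$, reducing the count to nested sums of the lengths of the one-dimensional fibres $\{(v_{3},v_{4}) \mid v_{3}+v_{4} = i - v_{1} - v_{2},\ 0 \le v_{3},v_{4} \le n-1\}$. Each fibre length is a clipped linear (tent-shaped) function of $i - v_{1} - v_{2}$, so the resulting double sums are piecewise quadratic and their total is a cubic polynomial in $n$ plus a bounded parity correction; collecting terms should yield $(8n^{3} + 3n^{2} + 4n)/12$ up to the floor. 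The value at $n = 2$, where $P_{2}^{4} = Q_{4}$ and the formula returns $7 = \sum_{k=0}^{3}\binom{k}{\floor{k/2}}$, provides a useful sanity check.

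The main obstacle is exactly this four-dimensional lattice-point bookkeeping. In three dimensions each boundary layer was a single sum over $v_{1}$ with a one-line fibre count, whereas here fixing the weight still leaves a genuinely two-dimensional region whose anti-lexicographic truncation interacts with the box constraints $0 \le v_{i} \le n-1$: the fibre lengths change regime as $i - v_{1} - v_{2}$ crosses $n-1$, and the truncation boundary of the half-filled central layer crosses the facets of the cube, so the nested sums break into several cases according to the residue of $n$ (most plausibly modulo $6$, to match the denominator $12$ and the floor). Carrying the parity and clipping corrections through these cases without error, and checking that the several case formulas collapse to the single closed form $\floor{(8n^{3}+3n^{2}+4n)/12}$, is where the real work lies; the structural steps---isoperimetry, unimodality, and the symmetric location of the peak---are routine extensions of what is already proved above.
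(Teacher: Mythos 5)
This statement is stated in the paper as a \emph{conjecture}, verified only by computational experiments for $n \le 100$; the paper contains no proof, so there is nothing to compare your argument against. What matters is whether your proposal actually constitutes a proof, and it does not: it is a strategy outline in which the two decisive steps are left undone. First, the closed-form evaluation of $|\Bndry(I)|$ at the critical initial segment is only asserted (``collecting terms should yield $(8n^{3}+3n^{2}+4n)/12$ up to the floor''); that computation \emph{is} the content of the conjecture, so nothing has been established. Second, and more seriously, the structural step you call routine---the four-dimensional analogue of \lemname~\ref{lem:local_non_dec_inc}---does not carry over in the form you describe. In three dimensions the increment $|\Bndry(I_{s})\setminus\Bndry(I_{s-1})|-1$ is nonnegative on an anti-lexicographic prefix of each middle weight class (because $v+\unit{3}$ is guaranteed to exist there) and nonpositive on the rest, so the profile within a layer is genuinely unimodal with a single threshold vertex. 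In four dimensions a forward neighbour $v+\unit{k}$ with $k\le 3$ is fresh only when $v_{i}=0$ for all $i>k$, while $v+\unit{4}$ fails to exist exactly when $v_{4}=n-1$; traversing a middle layer in anti-lexicographic order (by $v_{1}$ descending, then $v_{2}$, then $v_{3}$), the coordinate $v_{4}$ sweeps upward within each $(v_{1},v_{2})$-block, so the increments form a sawtooth---a possible $+1$ at the start of a block, $0$ in its interior, a possible $-1$ at its end---rather than a single rise followed by a single fall. The set of vertices with at least two fresh forward neighbours is therefore \emph{not} an initial segment of the layer, and locating the maximum of the partial sums requires a genuinely new argument, not a transcription of the three-dimensional one.

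Your coarse monotonicity claims (the analogue of \lemname~\ref{lem:non_dec_inc}, confining the peak to weights between $n-1$ and $3(n-1)-1$) and the Fact do transfer correctly, and the sanity check $\vbw(Q_{4})=7$ at $n=2$ is right. But as it stands the proposal reduces the conjecture to two open subproblems---identifying the extremal initial segment inside the central band, and evaluating its boundary in closed form---without solving either, which is essentially where the authors themselves left the question.
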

 

 \section*{Acknowledgments}
 The authors are grateful to Koichi Yamazaki for careful reading of the manuscript.
 The authors would like to thank Hosien S.\ Moghadam
 for sending some recent papers including the reference~\cite{Mog05} to the first author's supervisor Koichi Yamazaki.
 The first author was supported by JSPS Research Fellowship for Young Scientists.



\begin{thebibliography}{10}

\bibitem{BaloghMubayiPluhar2006}
J.~Balogh, D.~Mubayi, and A.~Pluh\'{a}r.
\newblock On the edge-bandwidth of graph products.
\newblock {\em Theoret. Comput. Sci.}, 359:43--57, 2006.

\bibitem{BL09}
S.~L. Bezrukov and U.~Leck.
\newblock A simple proof of the {Karakhanyan}--{Riordan} theorem on the even
  discrete torus.
\newblock {\em SIAM J. Discrete Math.}, 23:1416--1421, 2009.

\bibitem{BI91a}
B.~Bollob\'{a}s and I.~Leader.
\newblock Compressions and isoperimetric inequalities.
\newblock {\em J. Combin. Theory Ser. A}, 56:47--62, 1991.

\bibitem{BI91b}
B.~Bollob\'{a}s and I.~Leader.
\newblock Isoperimetric inequalities and fractional set systems.
\newblock {\em J. Combin. Theory Ser. A}, 56:63--74, 1991.

\bibitem{BDH04}
R.~C. Brigham, R.~D. Dutton, and S.~T. Hedetniemi.
\newblock A sharp lower bound on the powerful alliance number of ${C}_{m}
  \square {C}_{n}$.
\newblock {\em Congr. Numer.}, 167:57--63, 2004.

\bibitem{CK06tw}
L.~S. Chandran and T.~Kavitha.
\newblock The treewidth and pathwidth of hypercubes.
\newblock {\em Discrete Math.}, 306:359--365, 2006.

\bibitem{CCDG82}
P.~Z. Chinn, J.~Chv\'{a}talov\'{a}, A.~K. Dewdney, and N.~E. Gibbs.
\newblock The bandwidth problem for graphs and matrices - a survey.
\newblock {\em J. Graph Theory}, 6:223--254, 1982.

\bibitem{Chv75}
J.~Chv\'{a}talov\'{a}.
\newblock Optimal labeling of a product of two paths.
\newblock {\em Discrete Math.}, 11:249--253, 1975.

\bibitem{DPS02}
J.~D\'{i}az, J.~Petit, and M.~Serna.
\newblock A survey of graph layout problems.
\newblock {\em ACM Comput. Surv.}, 34:313--356, 2002.

\bibitem{Dje08}
S.~Djelloul.
\newblock Treewidth and logical definability of graph products.
\newblock {\em Theoret. Comput. Sci.}, 410:696--710, 2009.

\bibitem{EW08}
J.~Ellis and R.~Warren.
\newblock Lower bounds on the pathwidth of some grid-like graphs.
\newblock {\em Discrete Appl. Math.}, 156:545--555, 2008.

\bibitem{Fit74}
C.~H. FitzGerald.
\newblock Optimal indexing of the vertices of graphs.
\newblock {\em Math. Comp.}, 28:825--831, 1974.

\bibitem{Har66}
L.~H. Harper.
\newblock Optimal numberings and isoperimetric problems on graphs.
\newblock {\em J. Combin. Theory}, 1:385--393, 1966.

\bibitem{Har04}
L.~H. Harper.
\newblock {\em Global Methods for Combinatorial Isoperimetric Problems},
  volume~90 of {\em Cambridge Studies in Advanced Mathematics}.
\newblock Cambridge University Press, Cambridge, 2004.

\bibitem{IP07}
W.~Imrich and I.~Peterin.
\newblock Recognizing {Cartesian} products in linear time.
\newblock {\em Discrete Math.}, 307:472--483, 2007.

\bibitem{KS96}
H.~Kaplan and R.~Shamir.
\newblock Pathwidth, bandwidth, and completion problems to proper interval
  graphs with small cliques.
\newblock {\em SIAM J. Comput.}, 25:540--561, 1996.

\bibitem{KOY09sec}
K.~Kozawa, Y.~Otachi, and K.~Yamazaki.
\newblock Security number of grid-like graphs.
\newblock {\em Discrete Appl. Math.}, 157:2555--2561, 2009.

\bibitem{Mog83}
H.~S. Moghadam.
\newblock {\em Compression operators and a solution to the bandwidth problem of
  the product of $n$ paths}.
\newblock PhD thesis, University of California, Riverside, 1983.

\bibitem{Mog05}
H.~S. Moghadam.
\newblock Bandwidth of the product of $n$ paths.
\newblock {\em Congr. Numer.}, 173:3--15, 2005.

\bibitem{PSSV93}
M.~S. Paterson, H.~Schr\"{o}der, O.~S\'{y}kora, and I.~Vr\v{t}o.
\newblock A short proof of the dilation of a toroidal mesh in a path.
\newblock {\em Inform. Process. Lett.}, 48:197--199, 1993.

\bibitem{PikhurkoWojciechowski2006}
O.~Pikhurko and J.~Wojciechowski.
\newblock Edge-bandwidth of grids and tori.
\newblock {\em Theoret. Comput. Sci.}, 369:35--43, 2006.

\bibitem{Rio98}
O.~Riordan.
\newblock An ordering on the even discrete torus.
\newblock {\em SIAM J. Discrete Math.}, 11:110--127, 1998.

\bibitem{RS83}
N.~Robertson and P.~D. Seymour.
\newblock Graph minors. {I}. {E}xcluding a forest.
\newblock {\em J. Combin. Theory Ser. B}, 35:39--61, 1983.

\bibitem{RSV95}
J.~D.~P. Rolim, O.~S{\'y}kora, and I.~Vr\v{t}o.
\newblock Optimal cutwidths and bisection widths of 2- and 3-dimensional
  meshes.
\newblock In {\em WG '95}, volume 1017 of {\em Lecture Notes in Comput. Sci.},
  pages 252--264. Springer-Verlag, 1995.

\bibitem{SSV04}
H.~Schr\"{o}der, O.~S\'{y}kora, and I.~Vr\v{t}o.
\newblock Cyclic cutwidths of the two-dimensional ordinary and cylindrical
  meshes.
\newblock {\em Discrete Appl. Math.}, 143:123--129, 2004.

\bibitem{TV2010}
\v{L}. T{\"o}r{\"o}k and I.~Vr\v{t}o.
\newblock Antibandwidth of three-dimensional meshes.
\newblock {\em Discrete Math.}, 310:505--510, 2010.

\bibitem{WWD09}
X.~Wang, X.~Wu, and S.~Dumitrescu.
\newblock On explicit formulas for bandwidth and antibandwidth of hypercubes.
\newblock {\em Discrete Appl. Math.}, 157:1947--1952, 2009.

\end{thebibliography}
 

\end{document}